\documentclass[12pt]{article}
\usepackage[UKenglish]{babel} 
\usepackage[T1]{fontenc} 
\usepackage[utf8]{inputenc} 
\usepackage{amsmath}
\usepackage{amsthm}
\usepackage{amssymb}
\usepackage{cite} 
\usepackage{bm} 
\usepackage{indentfirst} 
\usepackage{caption} 
\usepackage{subcaption} 
\usepackage{enumitem} 
\usepackage{comment} 
\setlength{\marginparwidth}{2cm}
\usepackage{todonotes} 
 \hoffset -0.6in
 \voffset -.2in
 \textwidth 16.7cm
 \topmargin -1cm
 \textheight 24cm
\usepackage{csquotes}
\usepackage{hyperref} 
\newtheorem{theorem}{Theorem} 

\newtheorem{proposition}{Proposition}

\newtheorem*{assumption*}{Assumption}
\newtheorem{theoremA}{Theorem}

\newtheorem*{claim*}{Claim}
\makeatletter
 \DeclareFontFamily{OMX}{MnSymbolE}{}
 \DeclareSymbolFont{MnLargeSymbols}{OMX}{MnSymbolE}{m}{n}
 \SetSymbolFont{MnLargeSymbols}{bold}{OMX}{MnSymbolE}{b}{n}
 \DeclareFontShape{OMX}{MnSymbolE}{m}{n}{
	 <-6>  MnSymbolE5
	<6-7>  MnSymbolE6
	<7-8>  MnSymbolE7
	<8-9>  MnSymbolE8
	<9-10> MnSymbolE9
   <10-12> MnSymbolE10
   <12->   MnSymbolE12
 }{}
 \DeclareFontShape{OMX}{MnSymbolE}{b}{n}{
	 <-6>  MnSymbolE-Bold5
	<6-7>  MnSymbolE-Bold6
	<7-8>  MnSymbolE-Bold7
	<8-9>  MnSymbolE-Bold8
	<9-10> MnSymbolE-Bold9
   <10-12> MnSymbolE-Bold10
   <12->   MnSymbolE-Bold12
 }{}
 
 \let\llangle\@undefined
 \let\rrangle\@undefined
 \DeclareMathDelimiter{\llangle}{\mathopen}%
					  {MnLargeSymbols}{'164}{MnLargeSymbols}{'164}
 \DeclareMathDelimiter{\rrangle}{\mathclose}%
					  {MnLargeSymbols}{'171}{MnLargeSymbols}{'171}
\makeatother

\usepackage{tikz}
\usetikzlibrary{decorations.pathmorphing}
\usetikzlibrary{calc}

\usepackage{pgfplots}
\pgfplotsset{width=10cm,compat=1.9}
\usepgfplotslibrary{fillbetween}

\newcommand{\const}{\,{\rm const}\,}

\newcommand{\tr}{\,{\rm tr}\,}
\newcommand{\td}{\text{d}}

\newcommand{\ord}{\mathcal{O}}

\title{\textbf{Uniqueness of extremal charged  black holes in de Sitter}}
\author{David Katona\footnote{d.katona@sms.ed.ac.uk}
\\ \\ 
\small \sl School of Mathematics and Maxwell Institute for Mathematical Sciences, 
\\ 
\small \sl University of Edinburgh, King's Buildings, Edinburgh, EH9 3FD, UK }

\date{}

\pagestyle{plain}

\begin{document}
	\maketitle
	\begin{abstract}
		We prove a uniqueness theorem for the charged Nariai black holes and ultracold black holes in four dimensions. In particular, we show that an analytic solution to four-dimensional Einstein--Maxwell theory with a positive cosmological constant containing a static extremal Killing horizon with spherical cross-sections of large radius (compared to the cosmological scale), must be locally isometric to the extremal Reissner--Nordstr\"om--de Sitter black hole or its near-horizon geometry. The theorem generalises to extremal static horizons with small radius, establishing uniqueness of cold black holes for generic values of the radius.
	\end{abstract}	

\section{Introduction}

Black hole solutions of Einstein--Maxwell theory obey the famous no-hair theorem, which establishes uniqueness of the Kerr--Newman family among asymptotically flat, analytic black hole solutions with a connected, non-degenerate horizon (for a review see e.g.~\cite{chrusciel_stationary_2012}). The theorem has been extended to extremal Kerr and Kerr--Newman black holes~\cite{amsel_uniqueness_2010, figueras_uniqueness_2009, kleinwachter_analytical_2008, chrusciel_uniqueness_2010}, however if one allows for multi-black hole solutions, a similar uniqueness no longer holds. If one assumes staticity or supersymmetry, the general extremal black hole solution belongs to the Majumdar--Papapetrou family containing an arbitrary number of horizon components~\cite{chrusciel_nonexistence_2005,chrusciel_classification_2005,chrusciel_israel-wilson-perjes_2006}.

Much less is known if one considers a non-zero cosmological constant $\Lambda$. For the vacuum theory, uniqueness of the non-extremal Schwarzschild--de Sitter black hole among static solutions has been established under certain assumptions on the level sets of the lapse function~\cite{borghini_uniqueness_2023}, however recent numerical evidence shows that such assumptions are evaded by a class of binary black holes in de Sitter (dS)~\cite{dias_static_2023}. The existence of such binaries is in stark contrast with the $\Lambda=0$ case, when static non-extremal black holes with multiple horizons are excluded~\cite{bunting_nonexistence_1987}. In anti--de Sitter (AdS), a uniqueness theorem for non-extremal hyperbolic Schwarzschild--AdS spacetime with non-positive mass parameter has been established~\cite{chrusciel_towards_2001,lee_penrose_2015}, however for spherical black holes no analogous result is known\footnote{For some progress in this direction see \cite{chrusciel_nonsingular_2005,chrusciel_nondegeneracy_2017} and references therein.}. In the Einstein--Maxwell--$\Lambda$ theory, static non-extremal electro-vacuum black holes without any spatial symmetry have been numerically constructed in~\cite{herdeiro_static_2016}, suggesting a richer moduli space of black holes. This highlights the fact that our understanding of black holes with a non-zero cosmological constant is incomplete even in the static case.

Recently, the first uniqueness theorem for an extremal vacuum black hole with a cosmological constant has been proven~\cite{katona_uniqueness_2023}. In detail, it has been shown that the unique analytic solution to the four-dimensional\footnote{The theorem also applies to higher dimensional horizons with a maximally symmetric cross-sections.}  vacuum Einstein equations with a positive cosmological constant containing a static Killing horizon is the extremal Schwarzschild--de Sitter spacetime or its near-horizon geometry, the Nariai spacetime dS$_2\times$ S$^2$. Interestingly, this result excludes static extremal multi-black holes of this theory, at least in the analytic class. The purpose of the present work is to generalise this result for the electro-vacuum case with a positive cosmological constant.

Extremal black holes admit a well-defined near-horizon limit, in which the Einstein--equations for the intrinsic and extrinsic data of the horizon geometry decouple. The intrinsic data is described by the near-horizon geometry, which itself is a solution to the Einstein equations. Classification of such geometries has been achieved for a number of cases (for a review see \cite{kunduri_classification_2013}). However, in the near-horizon limit the extrinsic data (i.e. transverse derivatives of the metric) is lost, hence it is possible for a near-horizon geometry to correspond to multiple different black holes (or none). To recover this data, the systematic study of transverse deformations has been initiated~\cite{li_transverse_2016,li_electrovacuum_2019}, which aims to determine higher order corrections to the near-horizon geometry in the parameter describing the transverse direction to the horizon. First order transverse deformations have been studied for a number of horizons in the literature \cite{li_transverse_2016,li_electrovacuum_2019,kolanowski_towards_2021,dunajski_einstein_2016, fontanella_moduli_2016}. 

Remarkably, for simple enough geometries, it is possible to determine all higher order deformations. This method was used for establishing a uniqueness theorem for the extremal Schwarzschild--de Sitter black hole in vacuum gravity with a positive cosmological constant~\cite{katona_uniqueness_2023}. In the present work we aim to study all higher order transverse deformations to the extremal Reissner--Nordstr\"om--dS (RN--dS) spacetime in Einstein--Maxwell theory with a positive cosmological constant. This class of solutions contains three qualitatively different spacetimes and horizons, depending on the area $A_H$ of the spatial cross-sections of the horizon relative to the cosmological scale. If $A_H\Lambda>2\pi$, the near-horizon geometry is dS$_2\times$S$^2$, and the spacetime describes a black hole, known as the `charged Nariai black hole', with a qualitatively similar domain of outer communication (DOC) to the uncharged case. If $A_H\Lambda=2\pi$ the spacetime contains a degenerate horizon with flat near-horizon geometry $\mathbb{R}^{1,1}\times$S$^2$ (sometimes called the `ultracold black hole'). For $A_H\Lambda<2\pi$, the near-horizon geometry is AdS$_2\times$S$^2$, the spacetime is known as the `cold black hole'.

Our main result is summarised in the following theorem (for the detailed statement see Theorem \ref{thm_uniqueness}).

\begin{theoremA}
	Let $(M,g, F)$ be an analytic solution to the $d=4$ Einstein--Maxwell equations with a positive cosmological constant $\Lambda>0$. Let us further assume that the spacetime contains a degenerate Killing horizon with round spherical spatial cross-sections of area $A_H$, and that the Maxwell field is preserved by the Killing field generating the horizon. \begin{enumerate}[label=(\roman*)]
		\item If $A_H\Lambda\ge2\pi$, the solution is (up to identifications) the extremal Reissner--Nordstr\"om--dS solution or its near-horizon geometry dS$_2\times$S$^2$, or $\mathbb{R}^{1,1}\times$S$^2$ if the inequality is saturated,
		\item if $A_H\Lambda<2\pi$, for generic values of $A_H$ the solution is (up to identifications) the extremal Reissner--Nordstr\"om--dS solution or its near-horizon geometry AdS$_2\times$S$^2$.
	\end{enumerate}\label{thm_sum}
\end{theoremA}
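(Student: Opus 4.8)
The plan is to follow the transverse-deformation strategy of \cite{katona_uniqueness_2023}, adapted to include the Maxwell field. The starting point is Gaussian null coordinates $(v,r,x^A)$ adapted to the degenerate horizon at $r=0$, in which the metric takes the form $g = r^2 f\, dv^2 + 2\, dv\, dr + 2 r h_A\, dv\, dx^A + \gamma_{AB}\, dx^A dx^B$, with $f,h_A,\gamma_{AB}$ analytic functions of $(r,x^A)$ since the data are analytic and the horizon Killing field $\partial_v$ preserves $F$. Because the cross-sections are round spheres and the near-horizon data solve the known near-horizon equations (dS$_2\times$S$^2$, $\mathbb{R}^{1,1}\times$S$^2$, or AdS$_2\times$S$^2$ depending on $A_H\Lambda$), I would first record that at $r=0$ the horizon geometry is rigidly fixed: $\gamma_{AB}$ is the round metric of area $A_H$, $h_A$ is (after a gauge choice) determined by the near-horizon Maxwell charge, and $f|_{r=0}$ is the constant set by the near-horizon equation. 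The Maxwell field likewise has an expansion whose leading term is the near-horizon one. The goal is then to show the full analytic expansions in $r$ are uniquely determined and coincide with the RN--dS expansion.

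The key steps, in order: (1) write the Einstein--Maxwell--$\Lambda$ equations in these coordinates and expand every field as a power series in $r$, $f = \sum f^{(n)}(x) r^n$, $h_A = \sum h_A^{(n)} r^n$, $\gamma_{AB} = \gamma_{AB}^{(0)} + \sum_{n\ge1}\gamma_{AB}^{(n)} r^n$, and similarly for the Maxwell potential; (2) exploit the $SO(3)$ symmetry inherited from the round cross-section — by an averaging/uniqueness argument (as in the vacuum case) the analytic solution must itself be spherically symmetric, so each coefficient is a constant (or a fixed symmetric tensor built from $\gamma^{(0)}_{AB}$), collapsing the PDE hierarchy to an ODE-type recursion for countably many constants; (3) show that the $rr$-component of the Einstein equation (the ``Raychaudhuri-type'' constraint propagated off the horizon) together with the Maxwell equations forms a recursion expressing $\gamma^{(n)}$, $h^{(n)}$, $f^{(n)}$ and the Maxwell coefficients at order $n$ in terms of lower orders, with the crucial feature — the same one that makes \cite{katona_uniqueness_2023} work — that the ``source'' terms are polynomial and the recursion has a \emph{unique} solution once the near-horizon data and one free constant (the extremality-compatible analogue of the mass, i.e. the position on the RN--dS branch) are fixed; (4) observe that the extremal RN--dS metric, written in Gaussian null coordinates about its degenerate horizon, provides an explicit analytic solution of this same recursion with the same near-horizon data, hence by uniqueness of the recursion the two series agree coefficient-by-coefficient; (5) invoke analyticity to conclude the germ of the metric and Maxwell field agree with RN--dS on a neighbourhood of the horizon, and then a standard analytic-continuation / maximal-extension argument (as in the vacuum paper) upgrades this to a global isometry up to identifications, giving either the black hole or, in the degenerate case where the series terminates trivially, the near-horizon product geometry.

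The case division in the theorem enters through step (3)–(4): when $A_H\Lambda \ge 2\pi$ the recursion is nondegenerate for \emph{all} orders $n$, so the argument closes unconditionally and yields part (i); when $A_H\Lambda < 2\pi$ (the cold case) there is a resonance — at some order $n_\ast$ depending arithmetically on $A_H\Lambda$ the linear operator acting on $\gamma^{(n_\ast)}$ (or on the Maxwell coefficient) in the recursion develops a kernel, so uniqueness can fail for a measure-zero set of radii where that resonance condition $A_H\Lambda \in \mathbb{Q}$-type holds; excluding these gives the ``generic $A_H$'' qualifier of part (ii). I expect the main obstacle to be exactly this resonance analysis: identifying precisely for which $A_H\Lambda$ the recursion operator is non-invertible (this requires computing the indicial/characteristic exponents of the off-horizon system, presumably eigenvalues of a Lichnerowicz-type operator on $S^2$ shifted by $n$ and by the charge parameter) and checking that at the borderline values $A_H\Lambda = 2\pi$ and $A_H\Lambda > 2\pi$ no such obstruction occurs. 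A secondary technical point is handling the Maxwell gauge freedom and the constraint that $\partial_v$ preserves $F$ consistently through all orders, and confirming that the one genuine free parameter surviving the recursion is precisely the RN--dS family parameter rather than an extra hair.
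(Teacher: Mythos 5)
There is a genuine gap at your step (2). You propose that, because the horizon cross-section is a round sphere, an ``averaging/uniqueness argument'' forces the full analytic solution to be spherically symmetric, collapsing the problem to an ODE-type recursion in constants. No such inheritance of $SO(3)$ symmetry from the near-horizon data holds, and it is not what the vacuum paper does either: the Majumdar--Papapetrou multi-black-hole solutions are analytic solutions of Einstein--Maxwell (with $\Lambda=0$) whose degenerate horizons have exactly round spherical cross-sections and whose Maxwell field is invariant under the horizon Killing field, yet the spacetimes are not spherically symmetric. Symmetry of the bulk is essentially the content of the theorem, not an input; the actual proof keeps the full tensorial deformation data $(\mu^{(n)}_{ab}, Z^{(n-1)}_a,\dots)$ on $S^2$ at every order, derives a coupled elliptic system for the trace-free part of $\mu^{(n)}$ and for $Z^{(n-1)}$, and excludes non-symmetric solutions by expanding in vector and tensor spherical harmonics and showing the resulting algebraic system for each harmonic label $l$ forces $r_0^2\Lambda$ into a discrete exceptional set $E_{\nu=n(n+1)}$, all of whose elements lie below $1/2$. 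That is precisely why part (ii) carries the ``generic $A_H$'' caveat: the potential obstructions are non-spherically-symmetric ($l\ge 1$) modes, which your step (2) would have eliminated by fiat. Indeed your plan is internally inconsistent on this point: if every coefficient were a constant, the ``resonance'' you anticipate from a Lichnerowicz-type operator on $S^2$ could never arise, and you would be claiming unconditional uniqueness even in the cold case, which is stronger than what the mode analysis actually supports.

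Beyond this, two smaller corrections. The surviving free constant in the recursion is not ``the position on the RN--dS branch'': the radius $r_0$ and charge are already fixed by the near-horizon data, and the constant $C$ that appears at first order is a pure scale (removable by rescaling $v,\rho$ when $C\neq 0$), its only invariant meaning being the dichotomy between the full extremal RN--dS solution ($C\neq 0$) and its near-horizon geometry ($C=0$). Also, no global analytic-continuation or maximal-extension step is needed: once the Taylor coefficients in Gaussian null coordinates are shown to agree with the model to all orders, analyticity gives local isometry with RN--dS (or its near-horizon geometry), which is all the theorem asserts (``up to identifications'').
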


The static near-horizon geometries of this theory have been completely determined \cite{kunduri_uniqueness_2008}. Assuming compactness of spatial cross-sections, all such near-horizon geometries have round spherical spatial geometry. Combining this fact with Theorem \ref{thm_sum} provides a uniqueness theorem for extremal RN--dS black holes (for `large' or generic `small' horizon area) among spacetimes containing a static degenerate Killing horizon with compact cross-sections. As in the vacuum case, this excludes static multi-black holes  among analytic solutions. This is in contrast to the $\Lambda=0$ case, when the Majumdar--Papapetrou solutions with multiple black holes are analytic. It is worth mentioning however, that it has been argued that generic extremal black holes with a cosmological constant should have lower regularity at the horizon \cite{horowitz_almost_2022}. 

Note that Theorem \ref{thm_sum} only relies on the near-horizon data without global assumptions on the spacetime (apart from analyticity). In the literature there are known examples when the horizon geometry alone determines the geometry of the whole spacetime. This includes four-dimensional extremal toroidal horizons in vacuum gravity \cite{moncrief_symmetries_1983}, three-dimensional vacuum solutions with a cosmological constant \cite{li_three_2014}, five-dimensional supersymmetric black holes in AdS \cite{lucietti_uniqueness_2021,lucietti_uniqueness_2022}, and most recently the extremal Schwarzschild--dS black hole \cite{katona_uniqueness_2023}.

The proof follows that of the uncharged black hole \cite{katona_uniqueness_2023}. It has been shown that the first order transverse deformation to the horizon of the extremal RN--dS black hole is unique \cite{kolanowski_towards_2021}. Here we establish that this uniqueness holds for higher order deformations if $A_H\Lambda\ge 2\pi$, and generic horizon areas if $A_H\Lambda<2\pi$. In more detail, we find that the Einstein--Maxwell equations at each order reduce to a system of coupled elliptic PDEs on the horizon for the spatial metric and a one-form, the trivial solution of which corresponds to the extremal RN--dS. By using a basis defined by spherical harmonics on the round sphere, we find that these equations admit non-trivial solutions only for a discrete set of horizon areas. As a result, uniqueness can be proven for all orders, except for a non-generic set of horizons. It is currently not known what solutions these additional deformations correspond to, if any.

The structure of the paper is as follows. In Section \ref{sec_RNdS} we review the extremal RN--dS solutions. In Section \ref{sec_NH} we introduce the notion of transverse deformations to a near-horizon geometry and determine the higher order deformations in the case of a static extremal horizon, proving our main result (Theorem \ref{thm_uniqueness}). In the Appendix we list the relevant components of the Einstein--Maxwell equations in Gaussian null coordinates.

\section{Charged extremal black holes in de Sitter} \label{sec_RNdS}

We will focus on black hole solutions of the four-dimensional Einstein--Maxwell equations with a positive cosmological constant $\Lambda$
\begin{align}
	R_{\mu\nu}&=\Lambda g_{\mu\nu} + 2\left(F_{\mu\kappa}F_\nu{}^\kappa-\frac{1}{4}g_{\mu\nu}F_{\kappa\sigma}F^{\kappa\sigma}\right)\;, \label{eq_EinsteinMaxwell}\\
	&\qquad\td \star \mathcal{F} =0\;,\qquad \td \mathcal{F}=0\;, \nonumber
\end{align}
where $\star$ denotes the Hodge star operator, $R_{\mu\nu}$ is the Ricci tensor, $g$ is the spacetime metric and $\mathcal{F}$ is the Maxwell field. The electrically charged\footnote{We only consider electrically charged RN-dS black holes, as magnetically charged and dyonic RN-dS black holes or their near-horizon data can be obtained by a global electromagnetic duality transformation.} Reissner--Nordstr\"om--dS (RN--dS) black holes are given by the metric and Maxwell field
\begin{align}
	g = -\Phi(r)\td t^2 + \frac{\td r^2}{\Phi(r)} &+ r^2\td\Omega_2^2\;, \qquad\qquad \mathcal{F} = -\frac{Q}{r^2}\td t\wedge \td r\;, \qquad\text{ with } \\
	&\Phi(r) = 1-\frac{m}{r}-\frac{\Lambda r^2}{3}+\frac{Q^2}{r^2}\;,\label{eq_RNdS}
\end{align}
where $\td\Omega_2^2$ denotes the unit round metric on S$^2$, $m>0$ is the mass parameter, and $Q$ is the electric charge of the black hole. For a generic black hole solution $\Phi(r)$ has three distinct positive zeros at $r_c, r_+, r_-$, corresponding to a cosmological, an outer, and an inner horizon, respectively. Two of these zeros coincide at $r=r_0$ if and only if
\begin{align}
	m = 2r_0\left(1-\frac{2}{3}\Lambda r_0^2\right)\;, \qquad Q^2 = r_0^2\left(1-\Lambda r_0^2\right)\;. \label{eq_eRNdS}
\end{align}
Thus, in this case the function $\Phi$ factorises as 
\begin{equation}
	\Phi(r)=-\frac{(r-r_0)^2(\Lambda r^2+2\Lambda r_0 r+3\Lambda r_0^2-3)}{3r^2}\;. 
\end{equation}

Note that the second equation of (\ref{eq_eRNdS}) restricts the dimensionless parameter $0\le \Lambda r_0^2\le 1$, and the upper bound is saturated for the extremal Schwarzschild--dS black hole. The moduli space of RN--dS black holes are depicted in Fig. \ref{fig_moduli}. The extremal RN--dS black holes are a one-parameter family of solutions parametrised by $r_0$, or equivalently the horizon area $A_H=4\pi r_0^2$. Depending on this parameter, the charged black holes belong to three qualitatively different classes of spacetimes, whose Penrose diagrams are depicted in Fig. \ref{fig_Penrose}. 

\begin{figure}[h!]
	\centering
	\begin{tikzpicture}
		\begin{axis}[
			axis lines = left,
			xlabel = {\(m\Lambda^{1/2}\)},
			ylabel = {\(|Q|\Lambda^{1/2}\)},
			xtick distance = 0.2,
			ytick distance = 0.2,
			xmax = 1.4,
			ymax = 0.59,
		]
		\path[name path=axis] (axis cs:0,0) -- (axis cs:0.666667,0);
		\addplot[
			name path=f,
			domain=0:1,
			samples = 200,
			samples y=0,
		]
		({(2*sqrt(x)*(1-2*x/3))},
		{sqrt(x*(1-x))});
		\node [coordinate,pin={[align=center]above:{ultracold BH}}]
        at (axis cs:0.9428,0.5)   {};
		\node [coordinate,pin={[align=center]135:{cold BH}}]
        at (axis cs:0.50445,0.25514)   {};
		\node [coordinate,pin={[align=center]25:{charged Nariai BH}}]
        at (axis cs:0.714765,0.21894)   {};
		\node [coordinate,pin={[align=center]45:{extremal \\ Schwarzschild--dS BH}}]
        at (axis cs:0.66667,0)   {};
		\draw (axis cs:0.67165,0.0705337) -- (axis cs:0.666667,0);
		\node[circle, align=center] at (axis cs:0.35,0.06) {non-\\extremal BH};
		\addplot [
        thick,
        color=gray,
        fill=gray, 
        fill opacity=0.1
    ]
    fill between[
        of=f and axis,
        soft clip={domain=0:1},
    ];
		\end{axis}
		\end{tikzpicture}
\caption{Moduli space of RN--dS black holes.}
\label{fig_moduli}
\end{figure}
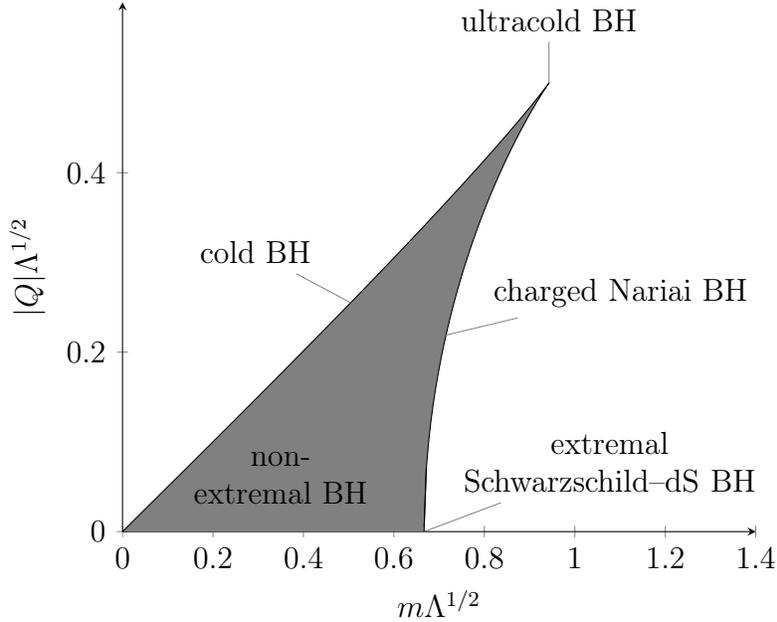 

For `large' horizon radius $1>r_0^2\Lambda>1/2$, the cosmological and outer horizons coincide $r_0=r_c=r_+$, and the extremal horizon has dS$_2\times$S$^2$ near-horizon geometry. The structure of the spacetime is depicted in Fig. \ref{subfig_Nariai} \cite{cardoso_nariai_2004}. Note that, similarly to the uncharged case \cite{podolsky_structure_1999}, one can check that points $P^\pm$ are asymptotic points where causal geodesics with $t=\const$ end (in the coordinates of (\ref{eq_RNdS})). Thus, the spacetime describes a black hole with respect to the future timelike infinity $P^+$, called the charged Nariai black hole. The region $r>r_0$ covers the DOC, which is qualitatively similar to that of the uncharged case, in particular the Killing vector field $\partial_t$ is spacelike for $r>r_-$ and $r$ acquires the role of time coordinate\footnote{The extremal RN--dS for $r_0^2\Lambda\ge1/2$ is not static in the strictest sense, since the hypersurface-orthogonal Killing vector field (which generates the horizon) is nowhere timelike in the DOC.}.

For `small' horizon radius $r_0^2\Lambda<1/2$, the inner and outer horizons coincide $r_0=r_+=r_-$. In this case the near-horizon geometry is AdS$_2\times$S$^2$. These spacetimes are known as cold black holes, and their Penrose diagrams can be seen in Fig. \ref{subfig_cold} \cite{cardoso_nariai_2004}. In contrast to the charged Nariai black hole, $\partial_t$ is timelike for $r<r_c$. 

\begin{figure}[h!]
	\centering
	\begin{subfigure}[c]{0.45\textwidth}
		\centering
\begin{tikzpicture}[scale=0.90]

	\coordinate (BHa) at (-4, 2);
	\coordinate (BHb) at (0, 2);
	\coordinate (BHc) at (4, 2);
	\coordinate (LeftTop) at (-4.5, 2);
	\coordinate (RightTop) at (4.5, 2);

	\path (BHa) +(-45:2.8284)  coordinate  (BHaBottom);
	\path (BHb) +(-45:2.8284)  coordinate  (BHbBottom);
	\path (BHaBottom) +(180:2.5)  coordinate  (LeftBottom);
	\path (BHbBottom) +(0:2.5)  coordinate  (RightBottom);
	\path (BHa) + (-135:0.75) coordinate (LeftHorizonEnd);
	\path (BHc) + (-45:0.75) coordinate (RightHorizonEnd);

	\draw[decorate,decoration=zigzag, thick] (LeftTop) --(BHa) -- node[midway, above] {\tiny$r=0$} (BHb) -- 
	node[midway, above] {\tiny$r=0$} (BHc)-- (RightTop);

	\draw (LeftHorizonEnd) -- (BHa) 
		-- node[midway, above, sloped] {\tiny$r= r_0$} (BHaBottom) 
		-- node[midway, above, sloped] {\tiny$r= r_0$} (BHb) 
		-- node[midway, above, sloped] {\tiny$r= r_0$} (BHbBottom) 
		-- node[midway, above, sloped] {\tiny$r= r_0$} (BHc)
		-- (RightHorizonEnd);

	\draw (LeftBottom) -- 
		node[near start, below] {\small$\cal{J}^-$}
		(BHaBottom) -- 
		node[midway, below] {\small$\cal{J}^-$} 
		node[midway, above] {\tiny$r = \infty$}
		(BHbBottom) -- 
		node[near end, below]{\small$\cal{J}^-$} 
		(RightBottom);

		\filldraw[color=black, fill=white, thick](BHa) circle (0.12);
		\filldraw[color=black, fill=white, thick](BHb) circle (0.12);
		\filldraw[color=black, fill=white, thick](BHc) circle (0.12);

		\path (BHa) +(90:0.35) node (Ia) {\tiny$P$};
		\path (BHb) +(90:0.35) node (Ia) {\tiny$P$};
		\path (BHc) +(90:0.35) node (Ia) {\tiny$P$};

\end{tikzpicture}
\caption{Extremal Schwarzschild--dS BH}
\label{subfig_SdS}
\end{subfigure}
\hfill
\begin{subfigure}[c]{0.45\textwidth}
		\centering
\begin{tikzpicture}[scale=0.90]

	\coordinate (LB) at (-4,0);
	\coordinate (Q1) at (-2,0);
	\coordinate (Q2) at (2,0);
	\coordinate (RB) at (4,0);
	\coordinate (LSB) at (-4,0);
	\coordinate (P1) at (-4,2);
	\coordinate (P2) at (0,2);
	\coordinate (P3) at (4,2);
	\coordinate (bif1) at (-2,4);
	\coordinate (bif2) at (-4,4);
	\coordinate (P4) at (-4,6);
	\coordinate (P5) at (0,6);
	\coordinate (P6) at (4,6);
	\coordinate (LT) at (-4,8);
	\coordinate (Q3) at (-2,8);
	\coordinate (Q4) at (2,8);
	\coordinate (RT) at (4,8);

	\draw[decorate,decoration=zigzag, thick] (P1) --  
	node[pos=0, below] {\tiny$P^+$} 
	node[midway, above, sloped] {\tiny$r=0$}
	node[pos=1, above] {\tiny$P^-$} (P4);

	\draw[decorate,decoration=zigzag, thick] (P2) --  
	node[pos=0, below] {\tiny$P^+$} 
	node[midway, above, sloped] {\tiny$r=0$} 
	node[midway, below, sloped] {\tiny$r=0$}
	node[pos=1, above] {\tiny$P^-$} (P5);

	\draw[decorate,decoration=zigzag, thick] (P3) --  
	node[pos=0, below] {\tiny$P^+$} 
	node[midway, above, sloped] {\tiny$r=0$}
	node[pos=1, above] {\tiny$P^-$} (P6);
	\draw (Q1) -- (P1) -- 
	node[midway, above, sloped] {\tiny$r=r_-$} (Q4) -- (P6) -- 
	node[pos=0.84, above, sloped] {\tiny$r=r_0$}
	node[midway, below, sloped] {\tiny$r=r_-$} (Q1);
	\draw (Q2) -- (P3) -- 
	node[pos=0.75, above, sloped] {\tiny $r=r_-$} (P5)-- 
	node[midway, below, sloped] {\tiny $r=r_0$} (Q3) -- (P4) -- 
	node[midway, below, sloped] {\tiny$r=r_-$} 
	node[pos=0.84, above, sloped] {\tiny$r=r_0$} (Q2);

	\draw (LB) -- 
		node[near start, below] {\small$\cal{J}^-$}
		(Q1) -- 
		node[midway, below] {\small$\cal{J}^-$} 
		node[midway, above] {\tiny$r = \infty$}
		(Q2) -- 
		node[near end, below]{\small$\cal{J}^-$} 
		(RB);

	\draw (LT) -- 
		node[near start, above] {\small$\cal{J}^+$}
		(Q3) -- 
		node[midway, above] {\small$\cal{J}^+$} 
		node[midway, below] {\tiny$r = \infty$}
		(Q4) -- 
		node[near end, above]{\small$\cal{J}^+$} 
		(RT);
	\filldraw[color=black, fill=white, thick](P1) circle (0.12);
	\filldraw[color=black, fill=white, thick](P2) circle (0.12);
	\filldraw[color=black, fill=white, thick](P3) circle (0.12);
	\filldraw[color=black, fill=white, thick](P4) circle (0.12);
	\filldraw[color=black, fill=white, thick](P5) circle (0.12);
	\filldraw[color=black, fill=white, thick](P6) circle (0.12);

\end{tikzpicture}
\caption{Charged Nariai BH}
\label{subfig_Nariai}
\end{subfigure}
\vskip\baselineskip
\begin{subfigure}[b]{0.45\textwidth}
	\centering
	\begin{tikzpicture}[scale=0.9]
		\coordinate (LB) at (-2,-2);
		\coordinate (RB) at (2,-2);
		\coordinate (LT) at (-2,2);
		\coordinate (RT) at (2,2);
		\coordinate (O) at (0,0);

		\draw[decorate,decoration=zigzag, thick] (LB) --  node[midway, below, sloped]{\tiny $r=0$} (LT);
		\draw[decorate,decoration=zigzag, thick] (RB) --  node[midway, above, sloped]{\tiny $r=0$} (RT);
		\draw (LB) -- node[pos=0.25, below, sloped]{\tiny $r=r_0$} (RT);
		\draw (LT) -- node[pos=0.25, above, sloped]{\tiny $r=r_0$} node[pos=0.5, below]{\tiny $P$} (RB);
		\draw (LB)--node[midway, below]{\small$\mathcal{J}^-$}node[midway, above]{\tiny$r=\infty$} (RB);
		\draw (LT)--node[midway, above]{\small$\mathcal{J}^+$}node[midway, below]{\tiny$r=\infty$} (RT);
		\filldraw[color=black, fill=white, thick](O) circle (0.12);
	\end{tikzpicture}
	\caption{Ultracold BH}
	\label{subfig_UC}
\end{subfigure}
\hfill
\begin{subfigure}[b]{0.45\textwidth}
	\centering
	\begin{tikzpicture}[scale=0.9]
		\coordinate (LB) at (-4,0);
		\coordinate (Q1) at (-2,0);
		\coordinate (Q2) at (2,0);
		\coordinate (RB) at (4,0);
		\coordinate (LM) at (-4,2);
		\coordinate (RM) at (4,2);
		\coordinate (LT) at (-4,4);
		\coordinate (Q3) at (-2,4);
		\coordinate (Q4) at (2,4);
		\coordinate (RT) at (4,4);
		\draw[decorate,decoration=zigzag, thick] (LB) --  node[pos=0.25, below, sloped]{\tiny $r=0$}node[pos=0.75, below, sloped]{\tiny $r=0$} (LT);
		\draw[decorate,decoration=zigzag, thick] (RB) --  node[pos=0.25, above, sloped]{\tiny $r=0$}node[pos=0.75, above, sloped]{\tiny $r=0$} (RT);
		\draw (LM) -- 
		node[midway, below, sloped]{\tiny $r=r_0$} (Q3)--
		node[pos=0.25, below, sloped]{\tiny $r=r_c$} (Q2)--
		node[midway, above, sloped]{\tiny $r=r_0$}(RM)--
		node[midway, below, sloped]{\tiny $r=r_0$}(Q4)--
		node[pos=0.75, above, sloped]{\tiny $r=r_c$} (Q1)--
		node[midway, above, sloped]{\tiny $r=r_0$}(LM);
		\draw (Q1)--node[midway, below]{\small$\mathcal{J}^-$} node[midway, above]{\tiny$r=\infty$} (Q2);
		\draw (Q3)--node[midway, above]{\small$\mathcal{J}^+$} node[midway, below]{\tiny$r=\infty$}(Q4);
		\filldraw[color=black, fill=white, thick](LM) circle (0.12);
		\filldraw[color=black, fill=white, thick](RM) circle (0.12);
	\end{tikzpicture}
	\caption{Cold BH}
	\label{subfig_cold}
\end{subfigure}
\caption{Penrose diagrams for extremal Reissner--Nordstr\"om--de Sitter black holes. (a) extremal Schwarzschild--dS black hole~\cite{lake_effects_1977,podolsky_structure_1999}, (b) charged Nariai black hole, (c) ultracold black hole, and (d) cold black hole~\cite{cardoso_nariai_2004}. Non-trivial asymptotic points are denoted by a circle.}\label{fig_Penrose}
\end{figure}
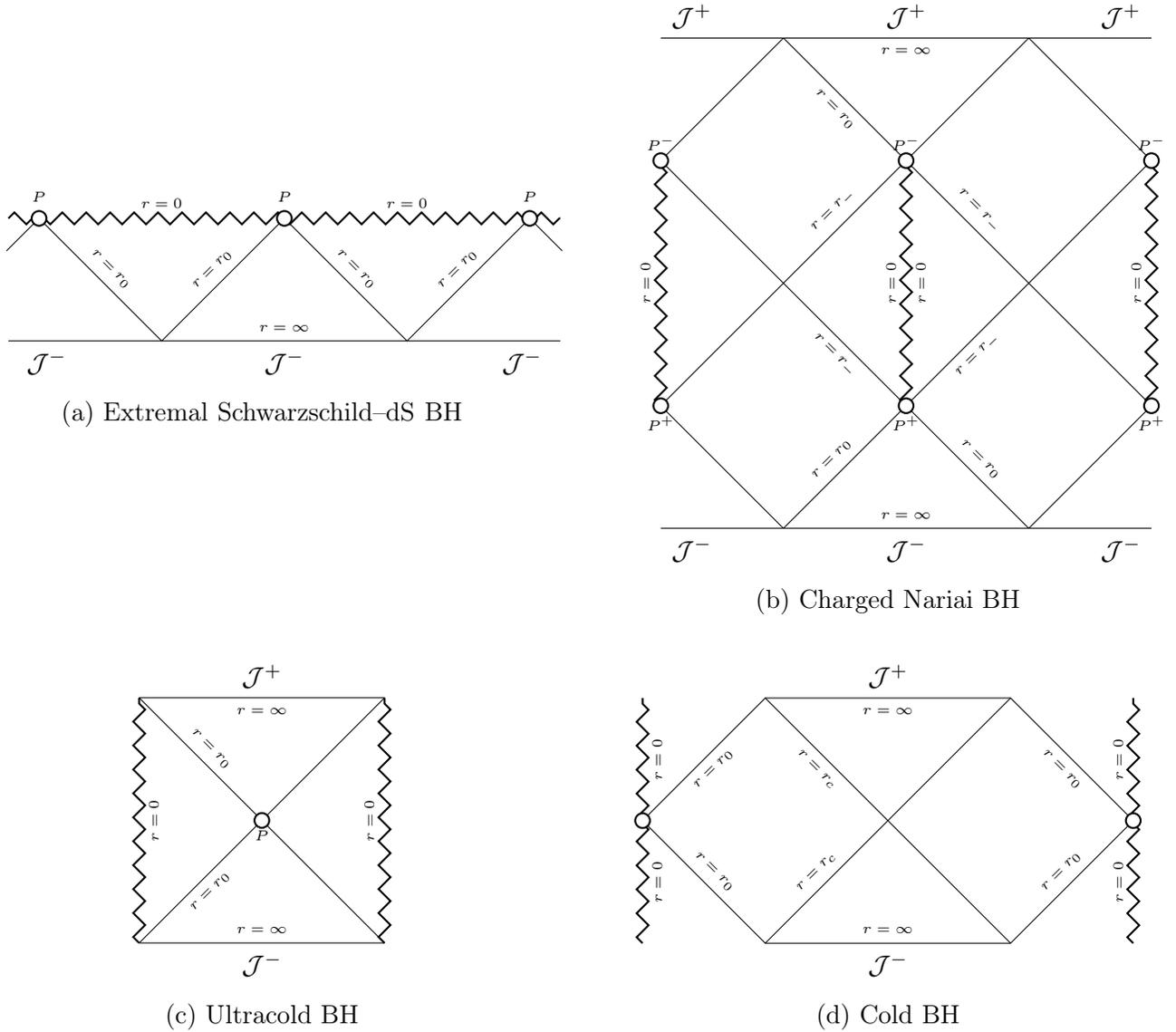

\pagebreak[1]
The special case $r_0^2\Lambda=1/2$ is known as the ultracold black hole, for which the three horizons coincide. The Penrose diagram is shown in Fig. \ref{subfig_UC} \cite{cardoso_nariai_2004}. The near-horizon geometry is $\mathbb{R}^{1,1}\times$S$^2$. One can check that, similarly to the `large' horizon case, timelike geodesics with $t=\const$ end on an asymptotic point $P$. One can thus interpret the null hypersurfaces $r=r_0$ as event horizons with respect to such observers.

\section{Transverse deformations of charged, static extremal horizons} \label{sec_NH}

\subsection{Near-horizon geometry and transverse deformations}

We will assume that $(M, g, \mathcal{F})$ is an analytic solution to the Einstein--Maxwell equations (\ref{eq_EinsteinMaxwell}) containing a degenerate Killing horizon $\mathcal{H}$ of a complete Killing vector field $K$ that also preserves the Maxwell field. In a neighbourhood of any smooth Killing horizon one can define Gaussian null coordinates \cite{moncrief_symmetries_1983} in which the metric and the Maxwell field take the form
\begin{align}
	g &= \phi\td v^2 + 2\td v\td \rho + 2\beta_a\td x^a\td v +\mu_{ab}\td x^a\td x^b\;, \label{eq_GNmetric}\\ 
	\mathcal{F} &= \Psi \td v\wedge \td \rho + Z_a\td \rho \wedge \td x^a +V_a\td v\wedge \td x^a + \frac{1}{2}B_{ab}\td x^a\wedge \td x^b\;, \label{eq_GNMaxwell}
\end{align} 
where $K=\partial_v$, the horizon is at $\rho=0$, and $(x^a)$ is a chart on the spatial cross-sections of the horizon $S$, which we assume to be compact. This coordinate system is unique up to the choice of the spatial cross-section $S$ and coordinates on $S$. One can show that $\phi$, $\beta_a$ and $V_a$ vanish at the horizon $\rho=0$ \cite{kunduri_classification_2013}, furthermore, extremality requires that $\partial_\rho\phi|_{\rho=0}=0$. Thus, we can introduce
\begin{align}
	\phi =: \rho^2 F\;, \qquad \beta_a =: \rho h_a\;, \qquad V_a =: \rho W_a\;.\label{eq_extremalGN}
\end{align} 
In these coordinates, the solution data decomposes into functions $F, \Psi$, one-forms $h=h_a\td x^a, Z=Z_a\td x^a, W=W_a\td x^a$, a two-form $B = \frac{1}{2}B_{ab}\td x^a\wedge \td x^b$ and a metric $\mu=\mu_{ab}\td x^a\td x^b$, which are defined on codimension-2 surfaces $S_{v\rho}$ of constant $v$ and $\rho$, and the horizon cross-section $S$ can be identified with $S_{v,0}$. 

It is known that extremal horizons admit a well-defined near-horizon limit, which is introduced as follows. We define the scaling diffeomorphism for $\epsilon >0$ as $\phi_\epsilon:(v, \rho, x^a)\mapsto(v/\epsilon, \epsilon \rho, x^a)$. Then the near-horizon geometry is defined as 
\begin{equation}
	g_{NH} := \lim_{\epsilon \to 0}\phi_\epsilon ^*g  = \rho^2\mathring F\td v^2 + 2\td v\td \rho + 2\rho \mathring h\td v + \mathring\mu\;, \label{eq_NHgeometry}
\end{equation} 
where we introduced the notation that $\mathring X :=X|_{\rho=0}$. The Maxwell field also admits a well-defined near-horizon limit
\begin{equation}
	\mathcal{F}_{NH} := \lim_{\epsilon \to 0}\phi_\epsilon ^*\mathcal{F} = \mathring\Psi \td v\wedge \td \rho + \rho\td v\wedge \mathring W +\mathring B\;. \label{eq_NHMaxwell}
\end{equation}
The near-horizon data consists of two functions $\mathring F, \mathring \Psi$, two one-forms $\mathring h, \mathring W$, a two-form $\mathring B$ and a metric $\mathring \mu$ on $S$, and contain the solution data evaluated at $\rho=0$, except for the one-form $Z$ which drops out in the limit. Note, that $(g_{NH}, \mathcal{F}_{NH})$ is also a solution of the Einstein--Maxwell equations (\ref{eq_EinsteinMaxwell}). 

The notion of {\it transverse deformations} was introduced in \cite{li_transverse_2016}, with the first-order deformation defined as $\left.\frac{\td}{\td\epsilon}\phi_\epsilon^*g\right|_{\epsilon=0}$, and similarly for the Maxwell field \cite{li_electrovacuum_2019}. This contains first derivatives of the solution data at the horizon, and in general, the $n$-th order transverse deformation contains information about the $n$-th derivative of the solution data, that is,
\begin{align}
	&\phi^{(n+2)}:=\partial_\rho^{n+2} \phi|_{\rho=0}=(n+2)(n+1)F^{(n)}\;,\qquad \beta^{(n+1)}:=\partial_\rho^n \beta_a|_{\rho=0}=(n+1)h_a^{(n)}\;,\nonumber\\
	&\qquad\mu_{ab}^{(n)}:=\partial_\rho^n \mu_{ab}|_{\rho=0}\;,\qquad \Psi^{(n)}:=\partial_\rho^n \Psi|_{\rho=0}\;,\qquad W_a^{(n)}:=\partial_\rho^n W_a|_{\rho=0}\;, \\
	&\qquad\qquad\qquad Z_a^{(n-1)}:=\partial_\rho^{n-1} Z_a|_{\rho=0}\;,\qquad B_{ab}^{(n)}:=\partial_\rho^n B_{ab}|_{\rho=0}\;.\nonumber
\end{align}
(Note the shift in the order for $\phi, \beta, Z$.) In the following, for any function $X$ and $n\ge 0$, we will use the notation $X^{(n)}:=\partial_\rho^nX|_{\rho=0}$, so $X^{(0)}=\mathring X$. 

The transverse deformations are not uniquely defined, as the Gaussian null coordinates admit a gauge freedom corresponding to shifting the spatial cross-section of the horizon. This leaves the near-horizon geometry unchanged (analogously to supertranslations in asymptotic symmetries), but changes the transverse deformation data. For example, under a gauge transformation generated by an arbitrary function $f$ on $S$, the first order horizon metric deformation changes as \cite{li_electrovacuum_2019}
\begin{equation}
	\mu^{(1)}_{ab}\to\mu^{(1)}_{ab} +\mathring\nabla_a\mathring\nabla_b f - \mathring h_{(a}\mathring\nabla_{b)} f\;.\label{eq_mu_gauge}
\end{equation}
(Transformation of the complete first order deformation data can be found in \cite{li_electrovacuum_2019}.) One may fix this gauge freedom to all orders by imposing a gauge condition on the first order deformation.

\subsection{Transverse deformations to extremal RN--dS horizons}
The near-horizon data for the extremal RN--dS solutions is given by 
\begin{align}
	\mathring F =\frac{2\Lambda r_0^2-1}{r_0^2}\;, \qquad \mathring\Psi = \frac{Q}{r_0^2}=\pm \frac{1}{r_0}\sqrt{1-\Lambda r_0^2}\;,\nonumber\\
	\qquad \mathring h=\mathring W=0\;, \qquad \mathring B=0\;,\qquad \mathring\mu = r_0^2\td \Omega_2^2\;,\label{eq_NHdata}
\end{align}
with $S$ being a two-sphere and $0<\Lambda r_0^2\le 1$. Here, we take the near-horizon data of electrically charged black holes, but any static near-horizon data is related to (\ref{eq_NHdata}) by an electromagnetic duality transformation. We emphasize that we do {\it not} assume that the full solution is purely electric. Our goal is to compute all higher order deformations to this near-horizon geometry. Uniqueness of the first order deformations was proven in \cite{kolanowski_towards_2021}, which we include here in our notation for completeness.
\begin{proposition}\label{prop_1}
	Consider a solution of (\ref{eq_EinsteinMaxwell}) with $\Lambda>0$ containing a degenerate Killing horizon with the near-horizon geometry of the extremal Reissner--Nordstr\"om--de Sitter horizon (\ref{eq_NHdata}), with the Killing field generating the horizon also preserving the Maxwell field. Then the first order transverse deformations to the degenerate horizon are gauge-equivalent to 
	\begin{align}
		\phi^{(3)} = \frac{2C}{r_0^2}(3-4\Lambda r_0^2)\;, \qquad\qquad \beta^{(2)}=0\;, \qquad \mu^{(1)} = C\mathring\mu \;,\nonumber \\
		\Psi^{(1)}=-C\mathring\Psi\;,\qquad Z^{(0)}=0\;, \qquad W^{(1)}=0\;,\qquad B^{(1)}=0\;, \label{eq_deformation1}
	\end{align} 
	for some $C\in \mathbb{R}$.
\end{proposition}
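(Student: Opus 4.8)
The plan is to plug the extremal RN–dS near-horizon data (\ref{eq_NHdata}) into the Einstein–Maxwell equations (\ref{eq_EinsteinMaxwell}) written in the Gaussian null coordinates (\ref{eq_GNmetric})–(\ref{eq_GNMaxwell}), expand each relevant component to first order in $\rho$ about $\rho=0$, and show that the resulting system of PDEs on the round sphere $S$ (with metric $\mathring\mu = r_0^2\td\Omega_2^2$) forces the first-order deformation data to be of the stated form up to the residual gauge freedom (\ref{eq_mu_gauge}). Concretely: the $\rho=0$ equations are already satisfied since $(g_{NH},\mathcal F_{NH})$ solves the field equations; the genuinely new content is at order $\rho^1$ (for $\mu_{ab}$, $\Psi$, $W_a$, $B_{ab}$) and order $\rho^2$–$\rho^3$ (for $\beta_a$, $\phi$, following the index shift noted after (\ref{eq_mu_gauge})). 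I would read off the precise components from the Appendix, which the excerpt says lists the Einstein–Maxwell equations in these coordinates.

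First I would exploit the gauge freedom. Since $\mathring h=0$ on the extremal RN–dS horizon, (\ref{eq_mu_gauge}) reduces to $\mu^{(1)}_{ab}\to\mu^{(1)}_{ab}+\mathring\nabla_a\mathring\nabla_b f$; decomposing $f$ into spherical harmonics, the $\ell\ge 1$ modes of $f$ can be used to kill all parts of $\mu^{(1)}_{ab}$ except the trace part and the transverse-traceless part — and on $S^2$ there are no smooth transverse-traceless symmetric 2-tensors. So after gauge-fixing, $\mu^{(1)}_{ab}=\Omega(x)\,\mathring\mu_{ab}$ for some function $\Omega$ on $S^2$. The goal then collapses to showing $\Omega$ is constant and that this constant determines everything else. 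Next I would look at the Maxwell sector: the $\td\mathcal F=0$ and $\td\star\mathcal F=0$ components at the relevant order, linearized about the data (\ref{eq_NHdata}) where $\mathring h=\mathring W=\mathring B=0$ and $\mathring\Psi$ is a nonzero constant. These should give elliptic/Hodge-type equations on $S^2$ for $Z^{(0)}$, $W^{(1)}$, $B^{(1)}$, $\Psi^{(1)}$: I expect $Z^{(0)}$ to be a closed-and-co-closed 1-form (hence zero on $S^2$), $B^{(1)}$ to be a harmonic 2-form constrained to vanish, $W^{(1)}$ similarly forced to zero, and $\Psi^{(1)}$ tied algebraically to the trace $\Omega$ of $\mu^{(1)}$ via the $\rho$-derivative of the $vr$ Einstein equation or the Maxwell scalar equation — producing $\Psi^{(1)}=-\tfrac12\Omega\,\mathring\Psi$ or similar, which after normalizing $\Omega=2C$ matches the claimed $\Psi^{(1)}=-C\mathring\Psi$.

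The crucial step is the Einstein equation that controls $\Omega$ itself. I would isolate the $\rho^1$-coefficient of the $(ab)$ Einstein equation (the spatial block) and of the $vv$ (or $v\rho$) equation; these couple $\mu^{(1)}$, $\phi^{(3)}$, $\beta^{(2)}$, and the Maxwell first-order data. Using the Maxwell results just obtained to eliminate $W^{(1)},B^{(1)},Z^{(0)}$, and substituting $\mu^{(1)}_{ab}=\Omega\mathring\mu_{ab}$, I expect the trace and trace-free parts of the spatial equation to separate: the trace-free part becomes a second-order elliptic equation $(\mathring\nabla_a\mathring\nabla_b-\tfrac12\mathring\mu_{ab}\mathring\Delta)\Omega = 0$, whose only solutions on $S^2$ are $\ell=0$ and $\ell=1$ modes — and the $\ell=1$ modes are pure gauge (they are exactly what the remaining $f$ freedom, or a rotation/boost of the cross-section, can remove), leaving $\Omega=\text{const}=:2C$. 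The remaining equations then fix $\beta^{(2)}=0$ and $\phi^{(3)}=\tfrac{2C}{r_0^2}(3-4\Lambda r_0^2)$ algebraically in terms of $C$, $r_0$, $\Lambda$ (using $Q^2=r_0^2(1-\Lambda r_0^2)$ and $\mathring F=(2\Lambda r_0^2-1)/r_0^2$).

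The main obstacle I anticipate is bookkeeping rather than conceptual: correctly extracting the order-$\rho^1$ parts of the full nonlinear Einstein–Maxwell equations — keeping track of the index shifts in (\ref{eq_extremalGN}), of cross-terms like $\mathring h\cdot(\text{first-order data})$ which happen to vanish here because $\mathring h=0$, and of the precise numerical coefficients — and then verifying that the $\ell=1$ sphere modes which survive the elliptic analysis are genuinely exhausted by the residual gauge transformations (and by rigid isometries/boosts of $S^2$), so that no spurious $\ell=1$ deformation remains. Once the mode analysis on the round $S^2$ is set up, the discreteness of the Laplacian spectrum does the rest, and the proposition follows; I would present it by first stating the gauge choice, then the Maxwell constraints, then the Einstein constraint and the spherical-harmonic decomposition, citing \cite{kolanowski_towards_2021} for agreement.
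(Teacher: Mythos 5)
Your overall plan (fix the gauge, use the Bianchi identities and the algebraic components of the field equations, then do a spherical-harmonic analysis on $S^2$) is in the right family, but two of your key steps are wrong, and they are precisely where the real content of the proposition lies. First, the gauge-fixing claim is too strong. With $\mathring h=0$ the residual freedom is $\mu^{(1)}_{ab}\to\mu^{(1)}_{ab}+\mathring\nabla_a\mathring\nabla_b f$ for a single scalar $f$, whose traceless part lies entirely in the ``scalar-derived'' (even-parity) sector $\mathring\nabla_a\mathring\nabla_b f-\tfrac12\mathring\mu_{ab}\mathring\nabla^2 f$. On $S^2$ the space of traceless symmetric tensors is spanned by both scalar-derived tensors and vector-derived tensors $\mathring\nabla_{(a}\xi_{b)}$ with $\xi$ co-exact (the $T^{\{lm-\}}$ of the paper's basis (\ref{eq_basis2tensor})); the latter are gauge-invariant, so you cannot reach $\mu^{(1)}_{ab}=\Omega\,\mathring\mu_{ab}$. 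Moreover, even in the even sector you must choose between normalising the trace (the paper fixes $\mathring\mu^{ab}\mu^{(1)}_{ab}=2C$) and removing the scalar-derived traceless part — one function $f$ cannot do both. So the traceless part $\tilde\mu^{(1)}$ remains a genuine unknown and your reduction to a single scalar equation for $\Omega$ does not happen.

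Second, the Maxwell sector does not decouple, and this is the heart of the problem. The Bianchi identities only give the algebraic relations $W^{(1)}=\tfrac12\td\Psi^{(1)}$ and $B^{(1)}=\td Z^{(0)}$; they do not force anything to vanish. The first-order Maxwell equation $\left(\mathcal{M}^a\right)^{(1)}=0$ for $Z^{(0)}$ contains the source $\mathring\Psi\,\mathring\nabla^b\tilde\mu^{(1)}_{ab}$, and the first-order spatial Einstein equation for $\tilde\mu^{(1)}$ contains $\mathring\Psi\big(\mathring\nabla_{(a}Z^{(0)}_{b)}-\tfrac12\mathring\mu_{ab}\mathring\nabla\cdot Z^{(0)}\big)$; since $\mathring\Psi\neq0$ for $\Lambda r_0^2<1$, $Z^{(0)}$ is not closed-and-co-closed a priori and no Hodge-theoretic argument kills it on its own. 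The actual proof must analyse this coupled elliptic system: expanding $Z^{(0)}$ and $\tilde\mu^{(1)}$ in the vector/tensor harmonics (\ref{eq_basis1form})--(\ref{eq_basis2tensor}), a non-trivial mode with $l\ge2$ is shown to require $l(l+1)=6-8\Lambda r_0^2$, which is impossible for $0<\Lambda r_0^2\le1$, while the $l=1$ sector is excluded by positivity of $-\mathring\nabla^2$. Your proposal never confronts this coupling; if the system really decoupled as you assume, the higher-order analysis of Proposition \ref{prop_2} would be equally trivial and the exceptional set $E_\nu$ would never arise. As it stands, the argument has a genuine gap at the step where $Z^{(0)}=0$ and $\tilde\mu^{(1)}=0$ are concluded.
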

\begin{proof}
	From the transformation of $\mu^{(1)}$ under gauge transformations (\ref{eq_mu_gauge}), it is clear that we may fix the gauge freedom of the transverse deformations by requiring that 
	\begin{equation}
		\mathring\mu^{ab} \mu^{(1)}_{ab} = 2C \label{eq_trace1}
	\end{equation} 
	for some constant $C$. This is possible due to standard existence and uniqueness theorems for Poisson's equation on $S$. 
	
	The Bianchi identities $\mathcal{B}_W^{(1)}=0$ and $\mathring{\mathcal{B}}_B=0$ (\ref{eq_BianchiW}-\ref{eq_BianchiB}) show that $W^{(1)}$ and $B^{(1)}$ are algebraically determined by $\Psi^{(1)}$ and $Z^{(0)}$ as
	\begin{align}
		W^{(1)} &= \frac{1}{2} \td \Psi^{(1)}\;, \label{eq_W1}\\
		B^{(1)} &= \td Z^{(0)}\;. \label{eq_B1}
	\end{align}
	Using this, the Einstein--Maxwell equations $\mathring{\mathcal{E}}_{ra}=0$, $\mathcal{E}^{(1)}_{rv}=0$, $\mathring{\mathcal{M}}^v=0$ (see (\ref{eq_phi}-\ref{eq_Z})) respectively yield
	\begin{align}
		\beta^{(2)}_a&=-\mathring\nabla^c\mu_{ac}^{(1)}+4\mathring\Psi Z_a^{(0)}\;, \label{eq_beta1}\\
		\phi^{(3)}&=\frac{2C}{r_0^2}(1-2\Lambda r_0^2)-\mathring\nabla\cdot\beta^{(2)}-4\mathring\Psi\Psi^{(1)}\;,\label{eq_phi1}\\
		\Psi^{(1)}&=\mathring\nabla\cdot Z^{(0)}-C\mathring\Psi\;, \label{eq_psi1}
	\end{align}
	and thus $\beta^{(2)},\phi^{(3)},\Psi^{(1)}$ are algebraically determined by $\mu^{(1)}$ and $Z^{(0)}$.

	Substituting (\ref{eq_W1}-\ref{eq_psi1}) into the Einstein--Maxwell equations $\left(\mathcal{M}^a{}\right)^{(1)}=0$ and $\mathcal{E}^{(1)}_{ab}=0$ yields a system of coupled PDEs for $\mu^{(1)}$ and $Z^{(0)}$ on $S$.
	\begin{align}
		&\qquad\qquad-\mathring\nabla^2Z^{(0)}_a+\frac{3}{r_0^2}Z^{(0)}_a-\mathring\Psi\mathring\nabla^b\tilde\mu^{(1)}_{ab}=0\;, \label{PDE_Z1} \\
		&\left(-\frac{1}{2}\mathring\nabla^2+2\Lambda\right)\tilde\mu^{(1)}_{ab} + 4\mathring\Psi\left[\mathring\nabla_{(a}Z^{(0)}_{b)}-\frac{1}{2}\left(\mathring\nabla\cdot Z^{(0)}\right)\mathring\mu_{ab}\right]=0\;.\label{PDE_mu1}
	\end{align}
	where we defined the traceless tensor $\tilde\mu^{(1)}_{ab}:=\mu^{(1)}_{ab}-C\mathring\mu_{ab}$. For (\ref{PDE_mu1}) we used that the first derivative of the Ricci tensor $\mathcal{R}_{ab}$ of $(S_{v,\rho}, \mu)$ is given by
	\begin{equation}
		\mathcal{R}^{(1)}_{ab} = \mathring\Delta_L\mu^{(1)}_{ab} + \mathring\nabla_{(a}v^{(1)}_{b)}
		\;,\label{eq_ricci_pert}
	\end{equation}
	where $\mathring\Delta_L$ is the Lichnerowicz operator of $(S, \mathring \mu)$, which is explicitly defined as
	\begin{equation}
		\mathring\Delta_L\mu^{(1)}_{ab}:= -\frac{1}{2}\mathring\nabla^2\mu^{(1)}_{ab} - \mathcal{\mathring{R}}_a{}^c{}_b{}^d\mu^{(1)}_{cd} 
		+ \mathcal{\mathring{R}}_{(a}^c\mu^{(1)}_{b)c}\;,
		\label{eq_def_lichnerowicz}
	\end{equation}
	and 
	\begin{equation}
		v^{(1)}_b:= \mathring\nabla^c\mu^{(1)}_{bc} - \frac{1}{2}\mathring\nabla_b\left(\mathring\mu^{cd}\mu^{(1)}_{cd}\right)\;. 
		\label{eq_def_v}
	\end{equation}
	
	For the special case $\mathring\Psi=0$, (\ref{PDE_Z1}-\ref{PDE_mu1}) decouple and uniqueness of the trivial solution follows from positivity of $-\mathring\nabla^2$. To solve (\ref{PDE_Z1}-\ref{PDE_mu1}) for general $\mathring\Psi\neq0$, we use the fact that one-forms on $S$ admit an orthogonal basis defined by
	\begin{align}
		V^{\{lm+\}} := \td Y^{lm}, \qquad V^{\{lm-\}}:= \star_2 V^{\{lm+\}}\;,\label{eq_basis1form}
	\end{align}
	where $Y^{lm}$ are spherical harmonics satisfying $-\mathring\nabla^2Y^{lm} = \frac{l(l+1)}{r_0^2}Y^{lm}$ for integers $l\ge1$ and $-l\le m\le l$, and $\star_2$ is the Hodge operator on $(S,\mathring\mu)$. Similarly, for traceless symmetric 2-tensors, we use the basis defined by\footnote{Note that $\mathring\nabla\cdot V^{\{lm-\}}=0$.}
	\begin{align}
		T_{ab}^{\{lm+\}}:=\mathring\nabla_{(a}V^{\{lm+\}}_{b)} - \frac{1}{2}\mathring\mu_{ab}\mathring\nabla\cdot V^{\{lm+\}}\;, \qquad T_{ab}^{\{lm-\}}:=\mathring\nabla_{(a}V^{\{lm-\}}_{b)}, \label{eq_basis2tensor}
	\end{align}
	for $l\ge2$ (the analogous 2-tensors vanish for $l=1$). This basis diagonalises the Laplacian. Indeed, one can show that 
	\begin{align}
		-\mathring\nabla^2 V^{\{lm\pm\}} = \frac{l(l+1)-1}{r_0^2}V^{\{lm\pm\}}\;,\label{eq_eigen_1form}
	\end{align}
	and 
	\begin{align}
		-\mathring\nabla^2 T^{\{lm\pm\}} = \frac{l(l+1)-4}{r_0^2}T^{\{lm\pm\}}\;.\label{eq_eigen_STT}
	\end{align}
	Furthermore, one can easily check that $\mathring\nabla\cdot T^{\{lm\pm\}}=\frac{2-l(l+1)}{2r_0^2}V^{\{lm\pm\}}$. Since indices $m, \pm$ do not play any role, we suppress them in the following.
	
	Let us expand $Z^{(0)}$ and $\tilde\mu^{(1)}$ in the basis (\ref{eq_basis1form}-\ref{eq_basis2tensor}). Let $\Pi^{\{l\}}$ denote the projection\footnote{
		With a slight abuse of notation we denote the projection of 1-forms and 2-tensors by the same symbol.
	} of the tensors to the one-dimensional subspace of (\ref{eq_basis1form}-\ref{eq_basis2tensor}) with a given $l$ (and given $m, \pm$). Since the operators appearing in (\ref{PDE_Z1}-\ref{PDE_mu1}) commute with $\Pi^{\{l\}}$, we may restrict (\ref{PDE_Z1}-\ref{PDE_mu1}) to a given $l$. For $l=1$ the corresponding 2-tensor $\Pi^{\{l=1\}}\tilde\mu^{(1)}$ vanishes, thus (\ref{PDE_Z1}) automatically yields $\Pi^{\{l=1\}}Z^{(0)}=0$ since $-\mathring\nabla^2$ is a positive operator on $S$. For $l\ge 2$, if $\Pi^{\{l\}}Z^{(0)}=0 \implies \Pi^{\{l\}}\tilde\mu^{(1)}=0$ by (\ref{PDE_Z1}), as there are no traceless transverse tensors on $S^2$ \cite{katona_uniqueness_2023}. Thus, we will consider $\Pi^{\{l\}}Z^{(0)}\neq0$, in which case $\Pi^{\{l\}}\tilde\mu^{(1)}$ must take the form\footnote{This follows from the fact that $\Pi^{\{l\}}$ projects onto a one-dimensional subspace (recall that indices $m,\pm$ are suppressed).}
	\begin{align}
		&\Pi^{\{l\}}\tilde\mu^{(1)}_{ab} = \frac{\alpha^{\{l\}}}{\mathring\Psi}\left[\mathring\nabla_{(a}\Pi^{\{l\}}Z^{(0)}_{b)}-\frac{1}{2}\left(\mathring\nabla\cdot \Pi^{\{l\}}Z^{(0)}\right)\mathring\mu_{ab}\right]\label{eq_muansatz}
	\end{align}
	with some constant $\alpha^{\{l\}}$, and a factor of $\mathring\Psi^{-1}$ is included for convenience. Equations (\ref{PDE_Z1}-\ref{PDE_mu1}) reduce to algebraic system of equations for $(\alpha^{\{l\}}, l)$. For $l$ we get $l=0$ or $l(l+1) = 6-8\Lambda r_0^2$. Thus, we conclude that for any $l\ge2$ the unique solution is the trivial one: $Z^{(0)}=0$ and $\tilde\mu^{(1)}=0$. The rest of (\ref{eq_deformation1}) follows from (\ref{eq_W1}-\ref{eq_psi1}).
\end{proof}

Let us now generalise Proposition \ref{prop_1} to higher order deformations, assuming that the lower order deformations agree with those of extremal Reissner--Nordstr\"om--dS. For reference, we list the $k$-th order deformation data of extremal RN--dS here:
\begin{align}
	&\phi^{(k+2)} =\frac{1}{r_0^2}\left[\frac{2\Lambda r_0^2}{3}\delta_{0k}-(1-\Lambda r_0^2)(k+3)!+2\left(1-\frac{2\Lambda r_0^2}{3}\right)(k+2)!\right]\left(-\frac{C}{2}\right)^k\;, \nonumber\\
	&\Psi^{(k)} = \mathring\Psi(k+1)!\left(-\frac{C}{2}\right)^k\;,\nonumber\\
	&\mu^{(k)} =\left\{
		\begin{array}{ll}
		\mathring\mu_{ab} \;,  & \text{if } k =0 \\
			C\mathring\mu_{ab}\;, &  \text{if } k = 1 \\
			\frac{C^2}{2}\mathring\mu_{ab}\;, & \text{if } k = 2\\
			0\;, & \text{if } 3\leq k ,
		\end{array}\right. \label{eq_deformation_k}\\
	&\beta^{(k+1)}=0 \;, \qquad Z^{(k-1)}=0 \;, \qquad W^{(k)}=0\;, \qquad B^{(k)} = 0 \;.\nonumber
\end{align}

\begin{proposition}\label{prop_2}
	Consider a spacetime with a degenerate horizon as in Proposition \ref{prop_1}. Assume furthermore that its $k$-th order transverse deformation agrees with (\ref{eq_deformation_k}) for all $1\le k \le n-1$. Then the $n$-th order transverse deformation is unique  and agrees with (\ref{eq_deformation_k}) given that $r_0^2\Lambda\in (0, 1]\setminus E_{\nu=n(n+1)}$, where 
	\begin{equation}
		E_\nu := \left\{\frac{1}{2}-\frac{\nu\lambda\pm \sqrt{2\lambda\nu(\nu+\lambda-2)}}{2\nu(\nu-2)}\Bigg|\lambda = l(l+1) \text{ for } l\in\mathbb{Z}^+\right\}\setminus \left\{1/2\right\}\;.\label{eq_Edef}
	\end{equation}
\end{proposition}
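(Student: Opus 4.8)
The plan is to induct on $n$, mirroring the structure of Proposition~\ref{prop_1} but now at order $n$ with the lower-order data already pinned down to (\ref{eq_deformation_k}). First I would extract from the appendix the $n$-th order components of the Einstein--Maxwell and Bianchi equations. As in the first-order case, I expect the Bianchi identities $\mathcal{B}_W^{(n)}=0$ and $\mathcal{B}_B^{(n-1)}=0$ to express $W^{(n)}$ and $B^{(n)}$ algebraically in terms of $\Psi^{(n)}$, $Z^{(n-1)}$ and lower-order (hence known, vanishing) data; then $\mathcal{E}_{ra}^{(n)}=0$, $\mathcal{E}_{rv}^{(n+1)}=0$ and $(\mathcal{M}^v)^{(n)}=0$ should fix $\beta^{(n+1)}$, $\phi^{(n+2)}$ and $\Psi^{(n)}$ algebraically in terms of $\mu^{(n)}$ and $Z^{(n-1)}$, again up to explicitly computable source terms built from the lower-order deformations. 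The gauge freedom (\ref{eq_mu_gauge}) only acts on $\mu^{(1)}$, so at order $n\ge 2$ there is no residual gauge: I would, however, still decompose $\mu^{(n)}$ into its trace part $\tfrac{1}{2}(\mathring\mu^{cd}\mu^{(n)}_{cd})\mathring\mu_{ab}$ and traceless part $\tilde\mu^{(n)}_{ab}$, and note that the trace is itself fixed algebraically (by an $\mathcal{E}_{rv}$- or trace-of-$\mathcal{E}_{ab}$-type equation) once the traceless part and $Z^{(n-1)}$ are known.

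The crux is the remaining coupled elliptic system for the traceless pair $(\tilde\mu^{(n)}_{ab}, Z^{(n-1)}_a)$, which I expect to have the schematic form
\begin{align}
	-\mathring\nabla^2 Z^{(n-1)}_a + \tfrac{3}{r_0^2}Z^{(n-1)}_a - \mathring\Psi\,\mathring\nabla^b\tilde\mu^{(n)}_{ab} &= (\text{known lower-order source})_a\;, \nonumber\\
	\big(-\tfrac{1}{2}\mathring\nabla^2 + c_n\big)\tilde\mu^{(n)}_{ab} + 4\mathring\Psi\big[\mathring\nabla_{(a}Z^{(n-1)}_{b)} - \tfrac{1}{2}(\mathring\nabla\cdot Z^{(n-1)})\mathring\mu_{ab}\big] &= (\text{known lower-order source})_{ab}\;, \nonumber
\end{align}
where the zeroth-order coefficient $c_n$ in the $\tilde\mu$-equation now depends on $n$ (it came from $2\Lambda$ at first order but will pick up contributions from $\mathring F$, $\mathring\Psi^2$ and the $n$-dependent combinatorial factors multiplying $\phi^{(n+2)}$ and $\Psi^{(n)}$ in the $\mathcal{E}_{ab}$ equation). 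By linearity it suffices to prove that the \emph{homogeneous} operator is invertible; the particular solution reproducing (\ref{eq_deformation_k}) is then guaranteed by construction since extremal RN--dS is a genuine solution. I would expand $Z^{(n-1)}$ and $\tilde\mu^{(n)}$ in the bases (\ref{eq_basis1form}--\ref{eq_basis2tensor}), use that $\Pi^{\{l\}}$ commutes with every operator involved, and reduce the homogeneous system at fixed $l$ to a $2\times 2$ (or, after using the $T$-to-$V$ divergence relation, effectively scalar) linear system. Setting its determinant to zero and solving for $r_0^2\Lambda$ should yield exactly the exceptional set: a quadratic in $r_0^2\Lambda$ with $\nu = n(n+1)$ playing the role that $6-8\Lambda r_0^2$, i.e. the $n=1$ specialisation, played before, whose roots are precisely the two branches $\tfrac12 - \frac{\nu\lambda \pm \sqrt{2\lambda\nu(\nu+\lambda-2)}}{2\nu(\nu-2)}$ with $\lambda = l(l+1)$, and $1/2$ excluded because $\mathring\Psi$ and several coefficients degenerate there. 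For $l=1$ the traceless 2-tensor vanishes, forcing $\Pi^{\{1\}}Z^{(n-1)}=0$ by positivity of $-\mathring\nabla^2$, exactly as in Proposition~\ref{prop_1}.

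The main obstacle I anticipate is bookkeeping: correctly assembling the $n$-th order reductions of $\mathcal{E}_{ra}$, $\mathcal{E}_{rv}$, $\mathcal{E}_{ab}$, $\mathcal{M}^v$, $\mathcal{M}^a$, $\mathcal{B}_W$, $\mathcal{B}_B$ from the appendix — in particular tracking which nonlinear cross-terms among lower-order deformations survive (most vanish because $\mathring h = \mathring W = \mathring B = 0$ and $Z^{(k-1)}=0$ for $k<n$, but curvature terms via $\mathcal{R}^{(n)}_{ab}$ through the Lichnerowicz operator (\ref{eq_ricci_pert}) and products like $\Psi^{(j)}\Psi^{(n-j)}$ need care) — and verifying that these collectively produce an $n$-independent operator structure with only the single scalar coefficient $c_n$ carrying the $n$-dependence. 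The second delicate point is confirming that the determinant condition really collapses to the stated closed form $E_{\nu=n(n+1)}$; I would check this against the $n=1$ case, where it must reproduce $l(l+1) = 6-8\Lambda r_0^2$ from the proof of Proposition~\ref{prop_1}, as a consistency test. Everything else — existence/uniqueness for the elliptic scalar reductions, the vanishing of transverse-traceless tensors on $S^2$, positivity of $-\mathring\nabla^2$ — is standard and already invoked in the excerpt.
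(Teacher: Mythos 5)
Your overall route --- fix the remaining data algebraically, reduce to a homogeneous coupled elliptic system for $(\tilde\mu^{(n)},Z^{(n-1)})$ (homogeneous because extremal RN--dS already furnishes a particular solution), expand in the bases (\ref{eq_basis1form}--\ref{eq_basis2tensor}) and read off the exceptional values of $r_0^2\Lambda$ from the resulting algebraic conditions --- is exactly the paper's. The genuine gap is in your anticipated structure of that system: you place all of the $n$-dependence in the single coefficient $c_n$ of the $\tilde\mu$-equation and freeze the $Z$-equation in its first-order form $-\mathring\nabla^2 Z+\tfrac{3}{r_0^2}Z-\mathring\Psi\,\mathring\nabla^b\tilde\mu_{ab}=0$. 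In fact the $(\mathcal{M}^a)^{(n)}$ equation contributes a term $\binom{n+1}{2}\phi^{(2)}Z^{(n-1)}_a$ (from the $\phi Z^a$ piece of (\ref{eq_Z})), so the $Z$-equation coefficient is also $n$-dependent: the paper obtains $-\mathring\nabla^2Z^{(n-1)}_a=-\tfrac{1}{r_0^2}\left[5-4\Lambda r_0^2+n(n+1)(2\Lambda r_0^2-1)\right]Z^{(n-1)}_a+\mathring\Psi\mathring\nabla^b\tilde\mu^{(n)}_{ab}$, i.e.\ (\ref{eq_Z_n_final}). With your frozen coefficient the determinant condition would not reproduce $E_{\nu}$, and your proposed consistency check against $n=1$ cannot detect the omission, since at $n=1$ the two forms coincide.

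This error propagates to your $l=1$ discussion, which is stated incorrectly: positivity of $-\mathring\nabla^2$ does \emph{not} force $\Pi^{\{1\}}Z^{(n-1)}=0$ for $n\ge2$, because the $n$-dependent mass term can change sign. The paper finds that this sector reduces to $\left[6-\nu+2(\nu-2)r_0^2\Lambda\right]\Pi^{\{1\}}Z^{(n-1)}=0$, which admits a non-trivial solution precisely when $r_0^2\Lambda=\tfrac{\nu-6}{2(\nu-2)}$ --- exactly the $\lambda=2$, upper-sign member of $E_\nu$ --- so the $l=1$ sector is a genuine source of exceptional values that your argument would miss. Relatedly, your reason for excluding $1/2$ (that $\mathring\Psi$ and coefficients degenerate there) is not right: $\mathring\Psi\neq0$ at $r_0^2\Lambda=1/2$; the value $1/2$ is removed from $E_\nu$ because it only arises as the lower-sign root at $\lambda=2$, where the tensor harmonic (\ref{eq_basis2tensor}) does not exist and the correct $l=1$ condition (\ref{eq_l1}) never yields $1/2$ --- this is what secures uniqueness in the ultracold case $A_H\Lambda=2\pi$. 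Minor points: the paper fixes $\mathring\mu^{ab}\mu^{(n)}_{ab}$ from $\mathcal{E}_{rr}^{(n-2)}=0$ purely in terms of lower-order data (not from an $\mathcal{E}_{rv}$- or trace-of-$\mathcal{E}_{ab}$-type equation, and not only once the traceless part is known), and the equations fixing $\beta^{(n+1)},\Psi^{(n)},\phi^{(n+2)}$ are $\mathcal{E}_{ra}^{(n-1)}$, $(\mathcal{M}^v)^{(n-1)}$, $\mathcal{E}_{vr}^{(n)}$, one order lower than you list.
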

\begin{proof}
	From the assumption on the lower order terms it follows that
	\begin{align}
		&\mu_{ab} = \left(1+\frac{C\rho}{2}\right)^2\mathring\mu_{ab} + \ord\left(\rho^n\right)\;, \qquad \mu^{ab} = \left(1+\frac{C\rho}{2}\right)^{-2}\mathring\mu^{ab} + \ord\left(\rho^n\right)\;,\nonumber\\
		&\Psi = \frac{\mathring\Psi}{\left(1+\frac{C\rho}{2}\right)^2}+\ord(\rho^n)\;, \qquad \beta = \ord\left(\rho^{n+1}\right)\;,\nonumber\\
		&Z =\ord\left(\rho^{n-1}\right) \;, \qquad W = \ord\left(\rho^n\right)\;, \qquad B = \ord\left(\rho^n\right)\;.\label{assumption_n}
	\end{align}
	Using (\ref{assumption_n}) for $Z$, the equation $\mathcal{E}_{rr}^{(n-2)}=0$ can be written as 
	\begin{equation}
		-\frac{1}{2}\mathring\mu^{ab}\mu_{ab}^{(n)} + \left[\frac{1}{2}\mathring\mu^{ab}\mu_{ab}^{(n)}+ \frac{1}{4}\left(\mu^{ac}\mu^{bd}\dot\mu_{ab}\dot\mu_{cd}-2\mu^{ab}\ddot\mu_{ab}\right)^{(n-2)}\right]=0\;,
	\end{equation}
	where dot denotes the derivative with respect to $\rho$ without evaluating at $\rho=0$. Notice that the square bracket contains derivatives that are of lower order (the $n$-th derivative terms cancel), which can be evaluated explicitly using (\ref{assumption_n}). Thus, the unique solution for the trace of $\mu^{(n)}$ must agree with that of (\ref{eq_deformation_k}), in particular, it must be constant:
	\begin{equation}
		\mathring\tr\mu^{(n)}:=\mathring\mu^{ab}\mu_{ab}^{(n)}=\left\{\begin{array}{ll}
			C^2\;,  	& \text{if } n=2 \\
			0\;, 				&  \text{if } n >2 \;.
			\end{array}\right.\label{eq_trace_n}
	\end{equation}

	We first show that given $\mu^{(n)}$ and $Z^{(n-1)}$, the rest of the $n$-th order data ($\phi^{(n+2)}$, $\Psi^{(n)}$, $\beta^{(n+1)}, W^{(n)}$, $B^{(n)}$) is uniquely determined by algebraic equations. Using (\ref{assumption_n}) and (\ref{eq_trace_n}), the equation $\mathcal{E}_{ra}^{(n-1)}=0$ simplifies to 
	\begin{equation}
		\beta^{(n+1)}_a = -\mathring\nabla^b\mu_{ab}^{(n)}+4\mathring\Psi Z_a^{(n-1)}\;, \label{eq_beta_n}
	\end{equation}
	so $\beta^{(n+1)}$ is algebraically determined by $\mu^{(n)}$ and $Z^{(n-1)}$. The Maxwell--equation $\left(\mathcal{M}^{v}\right)^{(n-1)}=0$ can be written as 
	\begin{equation}
		\Psi^{(n)} = \mathring\nabla\cdot Z^{(n-1)} -\frac{1}{2} \mathring\Psi\mathring\mu^{ab}\mu_{ab}^{(n)}-\frac{1}{2}\left[(\Psi\mu^{ab}\dot\mu_{ab})^{(n-1)}-\mathring\Psi\mathring\mu^{ab}\mu_{ab}^{(n)}\right]\;,\label{eq_psi_n}
	\end{equation}
	where we used (\ref{assumption_n}) for $\beta$ and $Z$. The square bracket in (\ref{eq_psi_n}) only contains lower order data (terms with the highest derivative of $\mu$ cancel), and the second term of (\ref{eq_psi_n}) can be evaluated by (\ref{eq_trace_n}), thus $\Psi^{(n)}$ is determined algebraically in terms of $Z^{(n-1)}$. The Einstein--equation $\mathcal{E}_{vr}^{(n)}=0$ yields by (\ref{assumption_n}) that
	\begin{equation}
		\phi^{(n+2)} = -\frac{n}{2}\mathring\mu^{ab}\mu_{ab}^{(n)}\phi^{(2)} -\mathring\nabla\cdot\beta^{(n+1)} - 2\left(\Psi^2\right)^{(n)} - \frac{1}{2}\left[\left(\mu^{ab}\dot\mu_{ab}\dot\phi\right)^{(n)}-n\mathring\mu^{ab}\mu_{ab}^{(n)}\phi^{(2)}\right]\;,
	\end{equation}
	where the square bracket contains only lower order data, and the remaining terms can be computed by (\ref{eq_trace_n}-\ref{eq_psi_n}) and (\ref{assumption_n}). Finally, derivatives of the Bianchi identities (\ref{eq_BianchiW}-\ref{eq_BianchiB}) express $W^{(n)}, B^{(n)}$ algebraically in terms of $\Psi^{(n)}, Z^{(n-1)}$ as
	\begin{align}
		W^{(n)}=\frac{1}{n+1}\td \Psi^{(n)}\;, \qquad B^{(n)}=\td Z^{(n-1)}\;.\label{eq_B_n}
	\end{align}

	Now we turn to the equations for $\mu^{(n)}$ and $Z^{(n-1)}$. The Einstein--equation $\mathcal{E}_{ab}^{(n)}=0$ can be written as 
	\begin{align}
		&\mathring\nabla_{(a}\beta^{(n+1)}_{b)} + \frac{1}{2}\binom{n+1}{2}\phi^{(2)}\mu_{ab}^{(n)} + \mathcal{R}_{ab}^{(n)}-\Lambda\mu_{ab}^{(n)}-\mathring\Psi^2\mu_{ab}^{(n)} -2\mathring\Psi\Psi^{(n)}\mathring\mu_{ab}= \nonumber \\
		&\qquad= -\frac{1}{4}\left(\mu^{cd}\dot\mu_{cd}\phi\dot\mu_{ab}\right)^{(n)}-\frac{1}{2}\left[\left(\phi\dot\mu_{ab}\right)^{(n+1)}-\binom{n+1}{2}\phi^{(2)}\mu_{ab}^{(n)}\right]\nonumber\\
		&\qquad\qquad +\frac{1}{2}\left(\phi\mu^{cd}\dot\mu_{ac}\dot\mu_{bd}\right)^{(n)} + \left[\left(\Psi^2\mu_{ab}\right)^{(n)}-\mathring\Psi^2\mu_{ab}^{(n)}-2\mathring\Psi\Psi^{(n)}\mathring\mu_{ab}\right]\;. \label{eq_mu_n}
	\end{align}
	Again, the highest order derivatives in the square brackets cancel, and thus the right-hand side of (\ref{eq_mu_n}) contains only lower order derivatives (recall that the first derivative of $\phi$ vanish at $\rho=0$). Equation (\ref{eq_mu_n}) can be simplified as follows. In Proposition 2 of \cite{katona_uniqueness_2023} it has been shown that the $n$-th order variations of the Ricci tensor $\mathcal{R}_{ab}^{(n)}$ of (\ref{assumption_n}) can be evaluated as 
	\begin{equation}
		\mathcal{R}_{ab}^{(n)}  = 
		\mathring\Delta_L\mu^{(n)}_{ab} + \mathring\nabla_{(a}v^{(n)}_{b)}\;, \label{eq_R_npert}
	\end{equation}
	where $v^{(n)}_a=\mathring\nabla^b\mu^{(n)}_{ba}-\frac{1}{2} \mathring\nabla_a (\mathring\mu^{cd}\mu^{(n)}_{cd})$. Using (\ref{eq_R_npert}) and (\ref{eq_trace_n}-\ref{eq_psi_n}), we can eliminate all $n$-th order fields except for $Z^{(n-1)}, \mu^{(n)}$. As for the first order deformations, it is useful to change variables $\mu^{(n)} \to \tilde\mu^{(n)}:= \mu^{(n)}-\frac{1}{2}\mathring\mu\mathring\tr\mu^{(n)}$. Since we know $Z^{(n-1)}=0, \tilde\mu^{(n)} = 0$ is a solution (corresponding to our model solution (\ref{eq_deformation_k})), all the lower order terms, which do not contain $Z^{(n-1)}$ or $\tilde\mu^{(n)}$, must cancel. Thus, we obtain for (\ref{eq_mu_n})
	\begin{align}
		&-\mathring\nabla^2\tilde\mu_{ab}^{(n)}=-\frac{2}{r_0^2}\left[1+\binom{n+1}{2}\left(2\Lambda r_0^2-1\right)\right]\tilde\mu_{ab}^{(n)}-8\mathring\Psi\left(\mathring\nabla_{(a}Z_{b)}^{(n-1)}-\frac{1}{2}\mathring\mu_{ab}\mathring\nabla\cdot Z^{(n-1)}\right)\;. \label{eq_mu_n_final}
	\end{align}
	Finally, the Maxwell--equation $\left(\mathcal{M}^{a}\right)^{(n)}=0$, after substituting in (\ref{eq_B_n}), yields
	\begin{equation}
		-\mathring\nabla^2Z^{(n-1)}_a +\mathring\nabla^b \mathring\nabla_aZ^{(n-1)}_b + \mathring\Psi\beta_a^{(n+1)} +\binom{n+1}{2}\phi^{(2)} Z^{(n-1)}_a-\mathring\nabla_a\Psi^{(n)}=0. \label{eq_Z_n}
	\end{equation}
	Substituting in (\ref{eq_beta_n}), (\ref{eq_psi_n}) and (\ref{eq_deformation_k}) for $\beta^{(n+1)}$, $\Psi^{(n)}$, and $\phi^{(2)}$, respectively, we obtain
	\begin{equation}
		-\mathring\nabla^2Z^{(n-1)}_a =-\frac{1}{r_0^2}\left[5-4\Lambda r_0^2+n(n+1)\left(2\Lambda r_0^2-1\right)\right]Z^{(n-1)}_a+ \mathring\Psi\mathring\nabla^b\tilde\mu_{ab}^{(n)}\label{eq_Z_n_final}\;,
	\end{equation}
	where we also used the identity on $(S, \mathring\mu)$ that  $\left(\mathring\nabla^b\mathring\nabla_a-\mathring\nabla_a\mathring\nabla^b\right)Z_b =\frac{1}{r_0^2} Z_a$. The claim that the $n$-th order deformation is unique and given by (\ref{eq_deformation_k}) is equivalent to (\ref{eq_mu_n_final}) and (\ref{eq_Z_n_final}) admitting only the trivial solution.

	In the uncharged case, when $\mathring\Psi=0$ (or equivalently $r_0^2\Lambda=1$), the two equations decouple into two eigenvalue equations of positive operators $-\mathring\nabla^2$. Notice that in such a case, the eigenvalues are negative for all $n$, thus the only solution is the trivial solution, proving the claim for $\mathring\Psi=0$. 

	For the charged case $\mathring\Psi\neq0$, we can solve the coupled system of equations (\ref{eq_mu_n_final}), (\ref{eq_Z_n_final}) following the same strategy as in Proposition \ref{prop_1}, expanding $ Z^{(n-1)}, \tilde\mu^{(n)}$ in the spherical harmonic basis (\ref{eq_basis1form}-\ref{eq_basis2tensor}). Again, the operators in (\ref{eq_mu_n_final}) and (\ref{eq_Z_n_final}) commute with projections $\Pi^{\{l\}}$, thus we may consider solutions $(\Pi^{\{l\}}\tilde\mu^{(n)}, \Pi^{\{l\}} Z^{(n-1)})$ for each $l$ separately. Since there are no traceless divergence-free 2-tensors on $(S,\mathring\mu)$, if $\Pi^{\{l\}}Z^{(n-1)}=0\implies \Pi^{\{l\}}\tilde\mu^{(n)}=0$ by (\ref{eq_Z_n_final}), thus we only consider the case $\Pi^{\{l\}}Z^{(n-1)}\neq0$ for some $l\ge1$.

	Let us first look at the projection to the $l=1$ sector: $\Pi^{\{l=1\}}Z^{(n-1)}$ and $\Pi^{\{l=1\}}\tilde\mu^{(n)}$. Since (\ref{eq_basis2tensor}) vanishes for $l=1$, we automatically have $\Pi^{\{l=1\}}\tilde\mu^{(n)}=0$. Using (\ref{eq_eigen_1form}), equation (\ref{eq_Z_n_final}) reduces to the algebraic equation
	\begin{equation}
		\left[6-\nu +2 (\nu-2)r_0^2\Lambda\right]\Pi^{\{l=1\}}Z^{(n-1)}=0\;, \label{eq_algebraic_Z_n_1}
	\end{equation}
	where we used the notation $\nu:=n(n+1)$. Eq. (\ref{eq_algebraic_Z_n_1}) admits a non-trivial solution $\Pi^{\{l=1\}}Z^{(n-1)}\neq0$ if and only if 
	\begin{equation}
		r_0^2\Lambda  = \frac{\nu-6}{2(\nu-2)}\;.\label{eq_l1}
	\end{equation}

	For $l\ge 2$, following the arguments of Proposition \ref{prop_1}, we write 
	\begin{equation}
		\Pi^{\{l\}}\tilde\mu^{(n)}_{ab} = \frac{\alpha^{\{l\}}}{\mathring\Psi}\left[\mathring\nabla_{(a}\Pi^{\{l\}}Z^{(n-1)}_{b)}-\frac{1}{2}\left(\mathring\nabla\cdot \Pi^{\{l\}}Z^{(n-1)}\right)\mathring\mu_{ab}\right] \label{eq_muansatz_n}
	\end{equation}
	with some constants $\alpha^{\{l\}}$. Using (\ref{eq_eigen_1form}-\ref{eq_eigen_STT}) and the divergence formula for (\ref{eq_basis2tensor}) we obtain a set of algebraic equations for any non-trivial solution of (\ref{eq_mu_n_final}) and (\ref{eq_Z_n_final}) for a given $l\ge2$:
	\begin{equation}
		\left.
	\begin{array}{rl}
		\alpha^{\{l\}}(\lambda-4) &= -\alpha^{\{l\}}(2+2\nu r_0^2\Lambda -\nu) -8(1-r_0^2\Lambda)\\
		\lambda-1 &= -5-2\nu r_0^2\Lambda +\nu +4r_0^2\Lambda +\frac{2-\lambda}{2}\alpha^{\{l\}}
	\end{array}\right\}\;\label{eq_algebraic_n}
	\end{equation}
	with $\nu:=n(n+1)$ and $\lambda:=l(l+1)$, as before. For a given $\nu$ and $\lambda$, (\ref{eq_algebraic_n}) can be solved for $r_0^2\Lambda$ and  $\alpha^{\{l\}}$, for which we get
	\begin{align}
		r_0^2\Lambda &= \frac{1}{2}-\frac{\nu\lambda\pm \sqrt{2\lambda\nu(\nu+\lambda-2)}}{2\nu(\nu-2)}\;,\label{eq_sol_Lambda}\\
		\alpha^{\{l\}} &= -2\frac{2\nu\mp\sqrt{2\lambda\nu(\nu+\lambda-2)} }{\nu(\lambda-2)}\;.\label{eq_sol_alpha}
	\end{align}
	Note that we obtain (\ref{eq_l1}) from (\ref{eq_sol_Lambda}) with $\lambda=2$ and taking the upper sign. 
	
	We have derived that if a non-trivial solution to (\ref{eq_mu_n_final}) and (\ref{eq_Z_n_final}) exists, $r_0^2\Lambda$ must take the form (\ref{eq_l1}), or (\ref{eq_sol_Lambda}) for some $l\ge 2$. Taking the contrapositive, if $r_0^2\Lambda$ is not of the form (\ref{eq_l1}), or (\ref{eq_sol_Lambda}) for any $l\ge 2$, the only solution to (\ref{eq_mu_n_final}) and (\ref{eq_Z_n_final}) is the trivial one $\tilde \mu^{(n)}=0$, $Z^{(n-1)}=0$. In such a case, the rest of the $n$-th order transverse deformation data is uniquely determined by the algebraic equations (\ref{eq_beta_n}-\ref{eq_B_n}). Conversely, if (\ref{eq_l1}) holds or there exists an integer $l\ge 2$ for a given $\nu$ such that (\ref{eq_sol_Lambda}) holds, equations (\ref{eq_mu_n_final}) and (\ref{eq_Z_n_final}) admit a non-trivial solution of the form $\tilde \mu^{(n)}=0$ and $Z^{(n-1)}\in\operatorname{span}\{V^{\{1\}}\}$, or $Z^{(n-1)}\in\operatorname{span}\{ V^{\{l\}}\}$  and $\mu^{(n)} = \Pi^{\{l\}}\mu^{(n)}$ with (\ref{eq_muansatz_n}), respectively.

	One can easily check that the second term on the right-hand side of (\ref{eq_sol_Lambda}) vanishes for $n>1$ if and only if $\lambda=2\iff l=1$, however in that case the necessary criterion is (\ref{eq_l1}). Thus, if $r_0^2\Lambda=1/2$, (\ref{eq_mu_n_final}) and (\ref{eq_Z_n_final}) admit only the trivial solution for all $n\ge2$.
\end{proof}

Now we are ready to prove our main result.

\begin{theorem}\label{thm_uniqueness}
	Consider an analytic solution of the Einstein--Maxwell equations (\ref{eq_EinsteinMaxwell}) with $\Lambda>0$ that contains a degenerate Killing horizon with a compact cross-section $S$ and near-horizon data (\ref{eq_NHdata}) with the horizon area $A_H=4\pi r_0^2$ satisfying $A_H\Lambda\notin 4\pi E$, where $E:= \cup_{n=2}^\infty E_{\nu=n(n+1)}$ and $E_\nu$ is defined in (\ref{eq_Edef}). Let us further assume that the Maxwell field is preserved by the Killing field generating the horizon. Then the solution is given by (\ref{eq_GNmetric}-\ref{eq_GNMaxwell}) with 
	\begin{align}
		\phi &=  \left\{
			\begin{array}{ll}
				-\frac{4}{C^2r_0^2}\left[1-\frac{2\left(1-\frac{2}{3}r_0^2\Lambda\right)}{1+\frac{C\rho}{2}}-\frac{r_0^2\Lambda}{3}\left(1+\frac{C\rho}{2}\right)^2+\frac{1-r_0^2\Lambda}{\left(1+\frac{C\rho}{2}\right)^2}\right] & \text{ for } C\neq0\;,\\
				\frac{2r_0^2\Lambda-1}{r_0^2}\rho^2&\text{ for } C=0\;,
			\end{array}\right.\nonumber\\
		\mu &=  r_0^2\left(1 +\frac{C\rho}{2}\right)^2d\Omega_2^2\; ,\nonumber\\
		\Psi &=  \frac{Q}{r_0^2}\left(1+\frac{C\rho}{2}\right)^{-2}\;,\nonumber\\
		\beta&=0\;, \qquad V=0\;,\qquad Z=0\;,\qquad B=0\;,\label{eq_metric_C}
	\end{align}
	where $C\in \mathbb{R}$. If $C\neq0$, this is the extremal Reissner--Nordstr\"om--de Sitter spacetime. If $C=0$ this is its near-horizon geometry $dS_2\times S^2$ (Nariai solution) for $A_H\Lambda>2\pi$, $\mathbb{R}^{1,1}\times S^2$ for $A_H\Lambda=2\pi$, and $AdS_2\times S^2$ for $A_H\Lambda<2\pi$.
	
\end{theorem}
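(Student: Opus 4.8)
The plan is to determine, order by order, the full transverse Taylor expansion of the metric and Maxwell field at the horizon, show it coincides with that of extremal Reissner--Nordstr\"om--dS, and then invoke analyticity. First I would introduce Gaussian null coordinates (\ref{eq_GNmetric})--(\ref{eq_GNMaxwell}) adapted to the degenerate Killing horizon $\mathcal{H}$, with the extremal parametrisation (\ref{eq_extremalGN}); the solution is then encoded in the sequence of transverse deformations evaluated at $\rho=0$, the zeroth-order terms being the near-horizon data (\ref{eq_NHdata}), which in particular forces $0<\Lambda r_0^2\le1$. The residual gauge freedom in the choice of horizon cross-section (cf.\ (\ref{eq_mu_gauge})) is fixed once and for all by imposing the trace normalisation (\ref{eq_trace1}) on $\mu^{(1)}$; as in the proof of Proposition \ref{prop_1} this is always achievable and singles out a constant $C\in\mathbb{R}$, which will parametrise the answer.

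Next I would argue by induction on the order $n\ge1$. The base case $n=1$ is Proposition \ref{prop_1}: in the chosen gauge the first-order deformation is (\ref{eq_deformation1}), which one checks term by term to be exactly the $k=1$ instance of the model data (\ref{eq_deformation_k}). For the inductive step, assume the transverse deformations at orders $1,\dots,n-1$ agree with (\ref{eq_deformation_k}); since the hypothesis $A_H\Lambda\notin 4\pi E$, with $E=\bigcup_{n\ge2}E_{n(n+1)}$, means precisely that $r_0^2\Lambda\in(0,1]\setminus E_{n(n+1)}$ for every $n\ge2$, Proposition \ref{prop_2} applies and shows the $n$-th order deformation is unique and again given by (\ref{eq_deformation_k}). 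Hence every transverse deformation of the solution coincides with that of extremal RN--dS.

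It then remains to pass from the transverse deformations to the solution itself. Because $(M,g,\mathcal{F})$ is analytic and Gaussian null coordinates form an analytic chart, for each fixed $x^a$ the components of $g$ and $\mathcal{F}$ are analytic in $\rho$ near $\rho=0$, hence equal their Taylor series there, whose coefficients are --- up to the combinatorial factors displayed just after (\ref{eq_NHMaxwell}) --- exactly the deformation data just determined. From (\ref{eq_deformation_k}) one reads off that $\beta\equiv V\equiv Z\equiv B\equiv0$, that $\mu$ is quadratic in $\rho$, and that the series defining $\phi$ and $\Psi$ are of geometric type, convergent for $|C\rho|<2$; summing them reproduces the closed forms (\ref{eq_metric_C}). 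Equivalently, extremal RN--dS written in the Gaussian null coordinates attached to its degenerate horizon has precisely the data (\ref{eq_deformation_k}), so two analytic solutions of (\ref{eq_EinsteinMaxwell}) with the same Taylor expansion at $\mathcal{H}$ agree in a neighbourhood of $\mathcal{H}$.

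Finally I would identify the geometry. For $C\ne0$ the substitution $r=r_0\left(1+\frac{C\rho}{2}\right)$ turns $\mu$ into $r^2\td\Omega_2^2$ and, using the extremality relations $m=2r_0\left(1-\frac{2}{3}\Lambda r_0^2\right)$, $Q^2=r_0^2(1-\Lambda r_0^2)$ of (\ref{eq_eRNdS}), turns $\phi$ into $-\frac{4}{C^2r_0^2}\Phi(r)$ with $\Phi$ as in (\ref{eq_RNdS}); a constant rescaling of $v$ then brings $(g,\mathcal{F})$ to the ingoing (for $C>0$) or outgoing (for $C<0$) Eddington--Finkelstein form of extremal RN--dS, so the solution is locally isometric to it. For $C=0$ the metric is the metric product of the round two-sphere of radius $r_0$ with the two-dimensional Lorentzian metric $\mathring F\rho^2\,\td v^2+2\,\td v\,\td\rho$, which has constant curvature and is $dS_2$, $\mathbb{R}^{1,1}$ or $AdS_2$ according to whether $\mathring F=(2\Lambda r_0^2-1)/r_0^2$ --- equivalently $A_H\Lambda=4\pi\Lambda r_0^2$ relative to $2\pi$ --- is positive, zero or negative, while $\mathcal{F}=\mathring\Psi\,\td v\wedge\td\rho$ is the corresponding homogeneous electric field; this is exactly the near-horizon geometry (\ref{eq_NHgeometry})--(\ref{eq_NHMaxwell}). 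Given Propositions \ref{prop_1} and \ref{prop_2}, the argument has no essential difficulty; the points requiring care are the bookkeeping of the induction --- making sure the base case of Proposition \ref{prop_2} at $n=2$ is supplied by Proposition \ref{prop_1} --- and checking the convergence of the Taylor series, so that the formal deformation data genuinely assembles into the closed-form solution (\ref{eq_metric_C}).
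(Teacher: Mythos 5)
Your proposal is correct and follows essentially the same route as the paper: induction on the deformation order with Proposition \ref{prop_1} as the base case and Proposition \ref{prop_2} (applicable precisely because $A_H\Lambda\notin4\pi E$ gives $r_0^2\Lambda\in(0,1]\setminus E_{n(n+1)}$ for all $n\ge2$) as the inductive step, followed by summing the Taylor series using analyticity and identifying the result with extremal RN--dS via the rescaling $r=r_0(1+\tfrac{C\rho}{2})$, or with the corresponding near-horizon geometry when $C=0$. The only difference is presentational detail (your explicit Eddington--Finkelstein identification and the sign discussion of $\mathring F$), which matches the paper's conclusion.
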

\begin{proof}
	By the assumption of analyticity, we may Taylor-expand the metric and Maxwell field components as $\phi = \sum_{n\ge0}\frac{\phi^{(n)}\rho^n}{n!}$, and similarly for $\beta$, $\mu$, $\Psi$, $Z$, $V=\rho W$, and $B$. The coefficients $\phi^{(n+2)}$, $\beta^{(n+1)}$, $\mu^{(n)}$, $\Psi^{(n)}$, $Z^{(n-1)}$, $W^{(n)}$, $B^{(n)}$ correspond to the $n$-th order transverse deformations. We prove uniqueness of these coefficients by induction. For the base case ($n=1$), uniqueness is established in Proposition \ref{prop_1}. The inductive step is proven in Proposition \ref{prop_2}, where we used our assumption of the horizon area $A_H$. Hence, the coefficients of metric and Maxwell field components are given by (\ref{eq_deformation_k}) to all orders. After summation, we obtain (\ref{eq_metric_C}). For $C=0$ we obtain the corresponding near-horizon geometries and the near-horizon limit of the Maxwell field. If $C\neq0$, we may rescale the coordinates as 
	\begin{align}
		v' = \frac{2}{ r_0 C}v\;, \qquad \rho' = \frac{ r_0 C}{2}\rho\;,\label{rescale}
	\end{align}
	and after shifting the radial coordinate as $r:=r_0 + \rho'$, we explicitly obtain the extremal Reissner--Nordstr\"om--de Sitter solution (\ref{eq_RNdS}).
\end{proof}
\pagebreak[3]
\noindent{\bf Remarks.}
\begin{enumerate}
	\item From the definition of $E_\nu$ in (\ref{eq_Edef}) one can show that for each $\nu=n(n+1)>2$, $E_\nu$ is bounded by $1/2$ from above, and thus $E$ is bounded by $1/2$ from above (in fact, $1/2$ is its supremum). It follows that Theorem \ref{thm_uniqueness} provides a uniqueness theorem for charged Nariai and ultracold black holes for which $A_H\Lambda\ge 2\pi$. At the same time, this method only allows us to establish a generic uniqueness result for cold black holes with $A_H\Lambda<2\pi$. It would be interesting to see what solutions, if any, the non-trivial deformations for the non-generic near-horizon data correspond to.
	\item The near-horizon geometry of the extremal Reissner--Nordstr\"om--de Sitter black hole has been shown to be unique in this theory among static near-horizon geometries with a compact cross-section \cite{kunduri_classification_2013}. That is, any such near-horizon data can be written as (\ref{eq_NHdata}) (after an electromagnetic duality transformation if necessary). It follows, that assumptions about the near-horizon data in Theorem \ref{thm_uniqueness} can be relaxed to requiring only staticity of the horizon and compactness of cross-sections with horizon area $A_H\Lambda\notin E$.
	\item Theorem \ref{thm_uniqueness} establishes the uniqueness of the (uncharged) extremal Schwarzschild--de Sitter in the Einstein--Maxwell theory with a positive cosmological constant, generalising the previous uniqueness theorem in vacuum gravity with $\Lambda>0$ \cite{katona_uniqueness_2023}.
	\item The calculations of this work can be extended to the extremal Reissner--Nordstr\"om--anti-de Sitter black holes with $\Lambda<0$, however due to the change of sign, some arguments in Proposition \ref{prop_1} and \ref{prop_2} using the positivity of the tensorial Laplacians no longer work. Thus, for a discrete set of values for $A_H\Lambda$, one gets non-trivial deformations, which allows only for a generic uniqueness result, similarly to the cold black holes. 
	\item One can also consider extremal hyperbolic Reissner--Nordstr\"om--AdS black holes, for which the geometry of the horizon cross-section is a compact hyperbolic surface of genus $g\ge 2$. The model solution is given by (\ref{eq_RNdS}) (with a hyperbolic spatial metric instead of $\td \Omega_2^2$) and $\Phi(r) = -1 -m/r -\Lambda r^2/3 +Q^2/r^2$, where $m = -2r_0(1+2\Lambda r_0^2 /3)$ and $Q^2 = -r_0^2(1+\Lambda r_0^2) $. In this case, for the equations governing the $n$-th order perturbations one finds (assuming agreement with the model solution to lower orders)
	\begin{align}
		&-\mathring\nabla^2\tilde\mu_{ab}^{(n)}=\frac{2}{r_0^2}\left[1-\binom{n+1}{2}\left(2\Lambda r_0^2+1\right)\right]\tilde\mu_{ab}^{(n)}-8\mathring\Psi\left(\mathring\nabla_{(a}Z_{b)}^{(n-1)}-\frac{1}{2}\mathring\mu_{ab}\mathring\nabla\cdot Z^{(n-1)}\right)\;\nonumber \\
		&-\mathring\nabla^2Z^{(n-1)}_a =-\frac{1}{r_0^2}\left[-5-4\Lambda r_0^2+n(n+1)\left(2\Lambda r_0^2+1\right)\right]Z^{(n-1)}_a+ \mathring\Psi\mathring\nabla^b\tilde\mu_{ab}^{(n)}\;,\label{eq_hyperbolic}
	\end{align}
	with $\mathring\Psi = \pm r_0^{-1}\sqrt{-\Lambda r_0^2 -1}$. Interestingly, one can check that the topological obstruction to uniqueness at first order, observed for the uncharged hyperbolic Schwarzschild--AdS black hole \cite{katona_uniqueness_2023}, persists in the charged case as well. That is, taking $Z^{(0)}=\xi$ and $\mathring\Psi \tilde\mu^{(1)}=2\mathring\nabla_{(a}\xi_{b)}$ for some harmonic one-form $\xi$ on the hyperbolic surface solves (\ref{eq_hyperbolic}) for $n=1$ and arbitrary $\Lambda r_0^2<-1$.
\end{enumerate}

\noindent{\bf Acknowledgements.} I would like to thank James Lucietti for the helpful conversations on this project, and his comments on the manuscript. This work is supported by an EPSRC studentship.

\section*{Statements and declarations}

\noindent {\bf Competing interests.}  The authors have no relevant financial or non-financial interests to disclose.
\\

\noindent{\bf Data availability.} Data sharing is not applicable to this article as no datasets were generated or analysed during the current study.
\appendix
\section{Einstein--Maxwell equations in Gaussian null coordinates}

In this section we list the relevant Einstein--Maxwell equations in Gaussian null coordinates, using the quantities introduced in (\ref{eq_GNmetric}-\ref{eq_extremalGN}). Here we assume that $\partial_v$ is Killing and preserves the Maxwell field, thus we omit any $v$-derivatives.

The $rv, ra, rr, ab$ components of the Einstein equations $\mathcal{E}:=\operatorname{Ric} -\Lambda g -8\pi T(\mathcal{F})=0$ respectively read 
\begin{align}
	\mathcal{E}_{rv}&:=\frac{1}{2\sqrt{\det \mu}}\partial_\rho\left[\sqrt{\det \mu}\left(\partial_\rho\phi-\beta^a\partial_\rho\beta_a\right)\right]+\frac{1}{2}\hat\nabla^a(\partial_\rho\beta_a)\nonumber\\
	&\qquad\qquad-\Lambda+2\Psi^2+2\Psi Z_a\beta^a-2\rho Z_aW^a+\frac{1}{2}\mathcal{F}^2=0\;, \label{eq_phi}\\
	\mathcal{E}_{ra}&:=\frac{1}{2\sqrt{\det \mu}}\partial_\rho\left[\sqrt{\det \mu}\partial_\rho\beta_a-\beta^b\partial_\rho\mu_{ab}\right] +\frac{1}{2}\hat\nabla^b\left(\partial_\rho\mu_{ab}\right)\nonumber\\
	&\qquad\qquad-\frac{1}{2}\hat\nabla_a\left(\tr\partial_\rho\mu\right)-2\Psi Z_a-2Z^bB_{ab}-2Z_aZ_b\beta^b=0\;, \label{eq_beta}\\
	\mathcal{E}_{rr}&:= -\frac{1}{2}\mu^{ab}\partial_\rho^2\mu_{ab} + \frac{1}{4}\mu^{ac}\mu^{bd}\partial_\rho\mu_{ab}\partial_\rho\mu_{cd}-2Z_aZ^a=0\;,\label{eq_trace}\\
	\mathcal{E}_{ab}&:= \frac{1}{2\sqrt{\det\mu}}\partial_\rho{\left[\sqrt{\det\mu}\left(2\hat\nabla_{(a}\beta_{b)}
	+\phi\partial_\rho\mu_{ab}-\beta^c\beta_c\partial_\rho\mu_{ab}\right)\right]}+\frac{1}{2}\hat\nabla_c\left(\beta^c\partial_\rho\mu_{ab}\right)
	\nonumber\\
	&\qquad\quad+\mathcal{R}_{ab}-\frac{1}{2}\left[\partial_\rho\beta_a-\beta^c\partial_\rho\mu_{ac}\right]
	\left[\partial_\rho\beta_b-\beta^d\partial_\rho\mu_{bd}\right] -(\partial_\rho\mu_{c(a})\hat\nabla^c\beta_{b)} \nonumber\\
	&\qquad\quad\;+ \frac{1}{2}(\beta^c\beta_c-\phi)(\partial_\rho\mu_{ac})(\partial_\rho\mu_{bd})\mu^{cd}-\Lambda\mu_{ab}-4\rho W_{(a}Z_{b)}-2Z_aZ_b(\beta^c\beta_c-\phi)\nonumber\\
	&\qquad\quad\;\;-4Z_{(a}B_{b)c}\beta^c-2B_{ac}B_b{}^c+\frac{1}{2}\mu_{ab}\mathcal{F}^2=0\; ,\label{eq_mu}
\end{align}
with
\begin{equation}
	\mathcal{F}^2 = -2 \Psi^2 -4\Psi Z_a\beta^a +4\rho W_aZ^a-4B_{ab}\beta^aZ^b+B_{ab}B^{ab} - 2(Z_a\beta^a)^2+2Z_aZ^a(\beta_b\beta^b-\phi)\;,
\end{equation}
where $\hat\nabla$ denotes the covariant derivative on $(S_{v,\rho}, \mu)$ and the indices $a,b,\dots$ are raised and lowered with $\mu$.

The Maxwell--equations $\nabla_\mu \mathcal{F}^{v\mu}=0$ and $\nabla_\mu \mathcal{F}^{a\mu}=0$ respectively yield
\begin{align}
	\mathcal{M}^v:=&-\frac{1}{\sqrt{\det\mu}}\partial_\rho\left[\sqrt{\det\mu}(\Psi+\beta_aZ^a)\right]+\hat\nabla_aZ^a=0\;, \label{eq_psi}\\
	\mathcal{M}^a:=&\frac{1}{\sqrt{\det \mu}}\partial_\rho\left[\sqrt{\det\mu}\left(\beta_bB^{ba}-\rho W^a-Z^a(\beta^2-\phi)+\beta^a(\Psi+\beta_bZ^b)\right)\right]\nonumber\\
	&\qquad\qquad\qquad\qquad+\hat\nabla_b\left(B^{ab}+Z^a\beta^b-Z^b\beta^a\right)=0\label{eq_Z}\;.
\end{align}

The Bianchi identities for $\mathcal{F}$ yield
\begin{align}
	\mathcal{B}_W:=&\hat\td \Psi- \partial_\rho(\rho W)=0\;,\label{eq_BianchiW}\\ 
	\mathcal{B}_B:=& \partial_\rho B - \hat \td Z=0\;,\label{eq_BianchiB}
\end{align}
where $\hat\td$ denotes the external derivative projected onto $S_{v,\rho}$, thus $W$ and $B$ are completely fixed by $\Psi$, $Z$ and the near-horizon data.

\bibliographystyle{sn-mathphys_modified}
\bibliography{ref}


\begin{thebibliography}{32}
\ifx \bisbn   \undefined \def \bisbn  #1{ISBN #1}\fi
\ifx \binits  \undefined \def \binits#1{#1}\fi
\ifx \bauthor  \undefined \def \bauthor#1{#1}\fi
\ifx \batitle  \undefined \def \batitle#1{#1}\fi
\ifx \bjtitle  \undefined \def \bjtitle#1{#1}\fi
\ifx \bvolume  \undefined \def \bvolume#1{\textbf{#1}}\fi
\ifx \byear  \undefined \def \byear#1{#1}\fi
\ifx \bissue  \undefined \def \bissue#1{#1}\fi
\ifx \bfpage  \undefined \def \bfpage#1{#1}\fi
\ifx \blpage  \undefined \def \blpage #1{#1}\fi
\ifx \burl  \undefined \def \burl#1{\textsf{#1}}\fi
\ifx \doiurl  \undefined \def \doiurl#1{\url{https://doi.org/#1}}\fi
\ifx \betal  \undefined \def \betal{\textit{et al.}}\fi
\ifx \binstitute  \undefined \def \binstitute#1{#1}\fi
\ifx \binstitutionaled  \undefined \def \binstitutionaled#1{#1}\fi
\ifx \bctitle  \undefined \def \bctitle#1{#1}\fi
\ifx \beditor  \undefined \def \beditor#1{#1}\fi
\ifx \bpublisher  \undefined \def \bpublisher#1{#1}\fi
\ifx \bbtitle  \undefined \def \bbtitle#1{#1}\fi
\ifx \bedition  \undefined \def \bedition#1{#1}\fi
\ifx \bseriesno  \undefined \def \bseriesno#1{#1}\fi
\ifx \blocation  \undefined \def \blocation#1{#1}\fi
\ifx \bsertitle  \undefined \def \bsertitle#1{#1}\fi
\ifx \bsnm \undefined \def \bsnm#1{#1}\fi
\ifx \bsuffix \undefined \def \bsuffix#1{#1}\fi
\ifx \bparticle \undefined \def \bparticle#1{#1}\fi
\ifx \barticle \undefined \def \barticle#1{#1}\fi
\ifx \bconfdate \undefined \def \bconfdate #1{#1}\fi
\ifx \botherref \undefined \def \botherref #1{#1}\fi
\ifx \url \undefined \def \url#1{\textsf{#1}}\fi
\ifx \bchapter \undefined \def \bchapter#1{#1}\fi
\ifx \bbook \undefined \def \bbook#1{#1}\fi
\ifx \bcomment \undefined \def \bcomment#1{#1}\fi
\ifx \oauthor \undefined \def \oauthor#1{#1}\fi
\ifx \citeauthoryear \undefined \def \citeauthoryear#1{#1}\fi
\ifx \endbibitem  \undefined \def \endbibitem {}\fi
\ifx \bconflocation  \undefined \def \bconflocation#1{#1}\fi
\ifx \arxivurl  \undefined \def \arxivurl#1{\textsf{#1}}\fi
\csname PreBibitemsHook\endcsname

\bibitem{chrusciel_stationary_2012}
\begin{barticle}
\bauthor{\bsnm{Chrusciel}, \binits{P.T.}},
\bauthor{\bsnm{Lopes~Costa}, \binits{J.}},
\bauthor{\bsnm{Heusler}, \binits{M.}}:
\batitle{{Stationary Black Holes: Uniqueness and Beyond}}.
\bjtitle{Living Rev. Rel.}
\bvolume{15},
\bfpage{7}
(\byear{2012})
{\href{https://arxiv.org/abs/1205.6112}{{arXiv:1205.6112}}}
{[gr-qc]}.
\doiurl{10.12942/lrr-2012-7}
\end{barticle}
\endbibitem

\bibitem{amsel_uniqueness_2010}
\begin{barticle}
\bauthor{\bsnm{Amsel}, \binits{A.J.}},
\bauthor{\bsnm{Horowitz}, \binits{G.T.}},
\bauthor{\bsnm{Marolf}, \binits{D.}},
\bauthor{\bsnm{Roberts}, \binits{M.M.}}:
\batitle{{Uniqueness of Extremal Kerr and Kerr-Newman Black Holes}}.
\bjtitle{Phys. Rev. D}
\bvolume{81},
\bfpage{024033}
(\byear{2010})
{\href{https://arxiv.org/abs/0906.2367}{{arXiv:0906.2367}}}
{[gr-qc]}.
\doiurl{10.1103/PhysRevD.81.024033}
\end{barticle}
\endbibitem

\bibitem{figueras_uniqueness_2009}
\begin{barticle}
\bauthor{\bsnm{Figueras}, \binits{P.}},
\bauthor{\bsnm{Lucietti}, \binits{J.}}:
\batitle{{On the uniqueness of extremal vacuum black holes}}.
\bjtitle{Class. Quant. Grav.}
\bvolume{27},
\bfpage{095001}
(\byear{2010})
{\href{https://arxiv.org/abs/0906.5565}{{arXiv:0906.5565}}}
{[hep-th]}.
\doiurl{10.1088/0264-9381/27/9/095001}
\end{barticle}
\endbibitem

\bibitem{kleinwachter_analytical_2008}
\begin{botherref}
Analytical treatment of limiting cases.
In: \oauthor{\bsnm{Kleinwächter}, \binits{A.}},
\oauthor{\bsnm{Petroff}, \binits{D.}},
\oauthor{\bsnm{Neugebauer}, \binits{G.}},
\oauthor{\bsnm{Ansorg}, \binits{M.}},
\oauthor{\bsnm{Meinel}, \binits{R.}} (eds.)
Relativistic {Figures} of {Equilibrium},
pp. 34--113.
Cambridge University Press,
Cambridge
(2008).
\doiurl{10.1017/CBO9780511535154.004}
\end{botherref}
\endbibitem

\bibitem{chrusciel_uniqueness_2010}
\begin{barticle}
\bauthor{\bsnm{Chrusciel}, \binits{P.T.}},
\bauthor{\bsnm{Nguyen}, \binits{L.}}:
\batitle{{A uniqueness theorem for degenerate Kerr-Newman black holes}}.
\bjtitle{Annales Henri Poincare}
\bvolume{11},
\bfpage{585}--\blpage{609}
(\byear{2010})
{\href{https://arxiv.org/abs/1002.1737}{{arXiv:1002.1737}}}
{[gr-qc]}.
\doiurl{10.1007/s00023-010-0038-3}
\end{barticle}
\endbibitem

\bibitem{chrusciel_nonexistence_2005}
\begin{barticle}
\bauthor{\bsnm{Chrusciel}, \binits{P.T.}},
\bauthor{\bsnm{Reall}, \binits{H.S.}},
\bauthor{\bsnm{Tod}, \binits{P.}}:
\batitle{{On non-existence of static vacuum black holes with degenerate components of the event horizon}}.
\bjtitle{Class. Quant. Grav.}
\bvolume{23},
\bfpage{549}--\blpage{554}
(\byear{2006})
{\href{https://arxiv.org/abs/gr-qc/0512041}{{arXiv:gr-qc/0512041}}}.
\doiurl{10.1088/0264-9381/23/2/018}
\end{barticle}
\endbibitem

\bibitem{chrusciel_classification_2005}
\begin{barticle}
\bauthor{\bsnm{Chrusciel}, \binits{P.T.}},
\bauthor{\bsnm{Tod}, \binits{P.}}:
\batitle{{The Classification of static electro-vacuum space-times containing an asymptotically flat spacelike hypersurface with compact interior}}.
\bjtitle{Commun. Math. Phys.}
\bvolume{271},
\bfpage{577}--\blpage{589}
(\byear{2007})
{\href{https://arxiv.org/abs/gr-qc/0512043}{{arXiv:gr-qc/0512043}}}.
\doiurl{10.1007/s00220-007-0191-9}
\end{barticle}
\endbibitem

\bibitem{chrusciel_israel-wilson-perjes_2006}
\begin{barticle}
\bauthor{\bsnm{Chrusciel}, \binits{P.T.}},
\bauthor{\bsnm{Reall}, \binits{H.S.}},
\bauthor{\bsnm{Tod}, \binits{P.}}:
\batitle{{On Israel-Wilson-Perjes black holes}}.
\bjtitle{Class. Quant. Grav.}
\bvolume{23},
\bfpage{2519}--\blpage{2540}
(\byear{2006})
{\href{https://arxiv.org/abs/gr-qc/0512116}{{arXiv:gr-qc/0512116}}}.
\doiurl{10.1088/0264-9381/23/7/018}
\end{barticle}
\endbibitem

\bibitem{borghini_uniqueness_2023}
\begin{barticle}
\bauthor{\bsnm{Borghini}, \binits{S.}},
\bauthor{\bsnm{Chru\'sciel}, \binits{P.T.}},
\bauthor{\bsnm{Mazzieri}, \binits{L.}}:
\batitle{{On the Uniqueness of Schwarzschild\textendash{}de Sitter Spacetime}}.
\bjtitle{Arch. Ration. Mech. Anal.}
\bvolume{247}(\bissue{2}),
\bfpage{1}--\blpage{35}
(\byear{2023})
{\href{https://arxiv.org/abs/1909.05941}{{arXiv:1909.05941}}}
{[math.DG]}.
\doiurl{10.1007/s00205-023-01860-1}
\end{barticle}
\endbibitem

\bibitem{dias_static_2023}
\begin{botherref}
\oauthor{\bsnm{Dias}, \binits{O.J.C.}},
\oauthor{\bsnm{Gibbons}, \binits{G.W.}},
\oauthor{\bsnm{Santos}, \binits{J.E.}},
\oauthor{\bsnm{Way}, \binits{B.}}:
{Static Black Binaries in de Sitter}
(2023)
{\href{https://arxiv.org/abs/2303.07361}{{arXiv:2303.07361}}}
{[gr-qc]}
\end{botherref}
\endbibitem

\bibitem{bunting_nonexistence_1987}
\begin{barticle}
\bauthor{\bsnm{Bunting}, \binits{G.L.}},
\bauthor{\bsnm{Masood-ul-Alam}, \binits{A.K.M.}}:
\batitle{Nonexistence of multiple black holes in asymptotically {Euclidean} static vacuum space-time}.
\bjtitle{General Relativity and Gravitation}
\bvolume{19}(\bissue{2}),
\bfpage{147}--\blpage{154}
(\byear{1987}).
\doiurl{10.1007/BF00770326}
\end{barticle}
\endbibitem

\bibitem{chrusciel_towards_2001}
\begin{barticle}
\bauthor{\bsnm{Chrusciel}, \binits{P.T.}},
\bauthor{\bsnm{Simon}, \binits{W.}}:
\batitle{{Towards the classification of static vacuum space-times with negative cosmological constant}}.
\bjtitle{J. Math. Phys.}
\bvolume{42},
\bfpage{1779}--\blpage{1817}
(\byear{2001})
{\href{https://arxiv.org/abs/gr-qc/0004032}{{arXiv:gr-qc/0004032}}}.
\doiurl{10.1063/1.1340869}
\end{barticle}
\endbibitem

\bibitem{lee_penrose_2015}
\begin{botherref}
\oauthor{\bsnm{Lee}, \binits{D.}},
\oauthor{\bsnm{Neves}, \binits{A.}}:
The penrose inequality for asymptotically locally hyperbolic spaces with nonpositive mass.
Communications in Mathematical Physics
\textbf{339}
(2015)
{\href{https://arxiv.org/abs/1310.3002}{{arXiv:1310.3002}}}
{[math.DG]}.
\doiurl{10.1007/s00220-015-2421-x}
\end{botherref}
\endbibitem

\bibitem{chrusciel_nonsingular_2005}
\begin{barticle}
\bauthor{\bsnm{Chrusciel}, \binits{P.T.}},
\bauthor{\bsnm{Delay}, \binits{E.}}:
\batitle{{Non-singular, vacuum, stationary space-times with a negative cosmological constant}}.
\bjtitle{Annales Henri Poincare}
\bvolume{8},
\bfpage{219}--\blpage{239}
(\byear{2007})
{\href{https://arxiv.org/abs/gr-qc/0512110}{{arXiv:gr-qc/0512110}}}.
\doiurl{10.1007/s00023-006-0306-4}
\end{barticle}
\endbibitem

\bibitem{chrusciel_nondegeneracy_2017}
\begin{barticle}
\bauthor{\bsnm{Chru\'sciel}, \binits{P.T.}},
\bauthor{\bsnm{Delay}, \binits{E.}},
\bauthor{\bsnm{Klinger}, \binits{P.}}:
\batitle{{On non-degeneracy of Riemannian Schwarzschild-anti de Sitter metrics}}.
\bjtitle{Adv. Theor. Math. Phys.}
\bvolume{23}(\bissue{5}),
\bfpage{1215}--\blpage{1269}
(\byear{2019})
{\href{https://arxiv.org/abs/1710.07597}{{arXiv:1710.07597}}}
{[gr-qc]}.
\doiurl{10.4310/ATMP.2019.v23.n5.a2}
\end{barticle}
\endbibitem

\bibitem{herdeiro_static_2016}
\begin{barticle}
\bauthor{\bsnm{Herdeiro}, \binits{C.A.R.}},
\bauthor{\bsnm{Radu}, \binits{E.}}:
\batitle{{Static Einstein-Maxwell black holes with no spatial isometries in AdS space}}.
\bjtitle{Phys. Rev. Lett.}
\bvolume{117}(\bissue{22}),
\bfpage{221102}
(\byear{2016})
{\href{https://arxiv.org/abs/1606.02302}{{arXiv:1606.02302}}}
{[gr-qc]}.
\doiurl{10.1103/PhysRevLett.117.221102}
\end{barticle}
\endbibitem

\bibitem{katona_uniqueness_2023}
\begin{barticle}
\bauthor{\bsnm{Katona}, \binits{D.}},
\bauthor{\bsnm{Lucietti}, \binits{J.}}:
\batitle{{Uniqueness of the extremal Schwarzschild de Sitter spacetime}}.
\bjtitle{Lett. Math. Phys.}
\bvolume{114}(\bissue{1}),
\bfpage{18}
(\byear{2024})
{\href{https://arxiv.org/abs/2309.04238}{{arXiv:2309.04238}}}
{[gr-qc]}.
\doiurl{10.1007/s11005-023-01761-0}
\end{barticle}
\endbibitem

\bibitem{kunduri_classification_2013}
\begin{barticle}
\bauthor{\bsnm{Kunduri}, \binits{H.K.}},
\bauthor{\bsnm{Lucietti}, \binits{J.}}:
\batitle{{Classification of near-horizon geometries of extremal black holes}}.
\bjtitle{Living Rev. Rel.}
\bvolume{16},
\bfpage{8}
(\byear{2013})
{\href{https://arxiv.org/abs/1306.2517}{{arXiv:1306.2517}}}
{[hep-th]}.
\doiurl{10.12942/lrr-2013-8}
\end{barticle}
\endbibitem

\bibitem{li_transverse_2016}
\begin{barticle}
\bauthor{\bsnm{Li}, \binits{C.}},
\bauthor{\bsnm{Lucietti}, \binits{J.}}:
\batitle{{Transverse deformations of extreme horizons}}.
\bjtitle{Class. Quant. Grav.}
\bvolume{33}(\bissue{7}),
\bfpage{075015}
(\byear{2016})
{\href{https://arxiv.org/abs/1509.03469}{{arXiv:1509.03469}}}
{[gr-qc]}.
\doiurl{10.1088/0264-9381/33/7/075015}
\end{barticle}
\endbibitem

\bibitem{li_electrovacuum_2019}
\begin{barticle}
\bauthor{\bsnm{Li}, \binits{C.}},
\bauthor{\bsnm{Lucietti}, \binits{J.}}:
\batitle{{Electrovacuum spacetime near an extreme horizon}}.
\bjtitle{Adv. Theor. Math. Phys.}
\bvolume{23}(\bissue{7}),
\bfpage{1903}--\blpage{1950}
(\byear{2019})
{\href{https://arxiv.org/abs/1809.08164}{{arXiv:1809.08164}}}
{[gr-qc]}.
\doiurl{10.4310/ATMP.2019.v23.n7.a5}
\end{barticle}
\endbibitem

\bibitem{kolanowski_towards_2021}
\begin{barticle}
\bauthor{\bsnm{Kolanowski}, \binits{M.}}:
\batitle{{Towards the black hole uniqueness: transverse deformations of the extremal Reissner-Nordstr\"om-(A)dS horizon}}.
\bjtitle{JHEP}
\bvolume{01},
\bfpage{042}
(\byear{2022})
{\href{https://arxiv.org/abs/2111.00806}{{arXiv:2111.00806}}}
{[gr-qc]}.
\doiurl{10.1007/JHEP01(2022)042}
\end{barticle}
\endbibitem

\bibitem{dunajski_einstein_2016}
\begin{barticle}
\bauthor{\bsnm{Dunajski}, \binits{M.}},
\bauthor{\bsnm{Gutowski}, \binits{J.}},
\bauthor{\bsnm{Sabra}, \binits{W.}}:
\batitle{{Einstein\textendash{}Weyl spaces and near-horizon geometry}}.
\bjtitle{Class. Quant. Grav.}
\bvolume{34}(\bissue{4}),
\bfpage{045009}
(\byear{2017})
{\href{https://arxiv.org/abs/1610.08953}{{arXiv:1610.08953}}}
{[hep-th]}.
\doiurl{10.1088/1361-6382/aa5992}
\end{barticle}
\endbibitem

\bibitem{fontanella_moduli_2016}
\begin{barticle}
\bauthor{\bsnm{Fontanella}, \binits{A.}},
\bauthor{\bsnm{Gutowski}, \binits{J.B.}}:
\batitle{{Moduli Spaces of Transverse Deformations of Near-Horizon Geometries}}.
\bjtitle{J. Phys. A}
\bvolume{50}(\bissue{21}),
\bfpage{215202}
(\byear{2017})
{\href{https://arxiv.org/abs/1610.09949}{{arXiv:1610.09949}}}
{[hep-th]}.
\doiurl{10.1088/1751-8121/aa6cbf}
\end{barticle}
\endbibitem

\bibitem{kunduri_uniqueness_2008}
\begin{barticle}
\bauthor{\bsnm{Kunduri}, \binits{H.K.}},
\bauthor{\bsnm{Lucietti}, \binits{J.}}:
\batitle{{Uniqueness of near-horizon geometries of rotating extremal AdS(4) black holes}}.
\bjtitle{Class. Quant. Grav.}
\bvolume{26},
\bfpage{055019}
(\byear{2009})
{\href{https://arxiv.org/abs/0812.1576}{{arXiv:0812.1576}}}
{[hep-th]}.
\doiurl{10.1088/0264-9381/26/5/055019}
\end{barticle}
\endbibitem

\bibitem{horowitz_almost_2022}
\begin{barticle}
\bauthor{\bsnm{Horowitz}, \binits{G.T.}},
\bauthor{\bsnm{Kolanowski}, \binits{M.}},
\bauthor{\bsnm{Santos}, \binits{J.E.}}:
\batitle{{Almost all extremal black holes in AdS are singular}}.
\bjtitle{JHEP}
\bvolume{01},
\bfpage{162}
(\byear{2023})
{\href{https://arxiv.org/abs/2210.02473}{{arXiv:2210.02473}}}
{[hep-th]}.
\doiurl{10.1007/JHEP01(2023)162}
\end{barticle}
\endbibitem

\bibitem{moncrief_symmetries_1983}
\begin{barticle}
\bauthor{\bsnm{Moncrief}, \binits{V.}},
\bauthor{\bsnm{Isenberg}, \binits{J.}}:
\batitle{{Symmetries of cosmological Cauchy horizons}}.
\bjtitle{Commun. Math. Phys.}
\bvolume{89}(\bissue{3}),
\bfpage{387}--\blpage{413}
(\byear{1983}).
\doiurl{10.1007/BF01214662}
\end{barticle}
\endbibitem

\bibitem{li_three_2014}
\begin{barticle}
\bauthor{\bsnm{Li}, \binits{C.}},
\bauthor{\bsnm{Lucietti}, \binits{J.}}:
\batitle{{Three-dimensional black holes and descendants}}.
\bjtitle{Phys. Lett. B}
\bvolume{738},
\bfpage{48}--\blpage{54}
(\byear{2014})
{\href{https://arxiv.org/abs/1312.2626}{{arXiv:1312.2626}}}
{[hep-th]}.
\doiurl{10.1016/j.physletb.2014.09.012}
\end{barticle}
\endbibitem

\bibitem{lucietti_uniqueness_2021}
\begin{barticle}
\bauthor{\bsnm{Lucietti}, \binits{J.}},
\bauthor{\bsnm{Ovchinnikov}, \binits{S.G.}}:
\batitle{{Uniqueness of supersymmetric AdS5 black holes with SU(2) symmetry}}.
\bjtitle{Class. Quant. Grav.}
\bvolume{38}(\bissue{19}),
\bfpage{195019}
(\byear{2021})
{\href{https://arxiv.org/abs/2105.08542}{{arXiv:2105.08542}}}
{[hep-th]}.
\doiurl{10.1088/1361-6382/ac13b7}
\end{barticle}
\endbibitem

\bibitem{lucietti_uniqueness_2022}
\begin{barticle}
\bauthor{\bsnm{Lucietti}, \binits{J.}},
\bauthor{\bsnm{Ntokos}, \binits{P.}},
\bauthor{\bsnm{Ovchinnikov}, \binits{S.G.}}:
\batitle{{On the uniqueness of supersymmetric AdS(5) black holes with toric symmetry}}.
\bjtitle{Class. Quant. Grav.}
\bvolume{39}(\bissue{24}),
\bfpage{245006}
(\byear{2022})
{\href{https://arxiv.org/abs/2208.00896}{{arXiv:2208.00896}}}
{[hep-th]}.
\doiurl{10.1088/1361-6382/aca193}
\end{barticle}
\endbibitem

\bibitem{cardoso_nariai_2004}
\begin{barticle}
\bauthor{\bsnm{Cardoso}, \binits{V.}},
\bauthor{\bsnm{Dias}, \binits{O.J.C.}},
\bauthor{\bsnm{Lemos}, \binits{J.P.S.}}:
\batitle{{Nariai, Bertotti-Robinson and anti-Nariai solutions in higher dimensions}}.
\bjtitle{Phys. Rev. D}
\bvolume{70},
\bfpage{024002}
(\byear{2004})
{\href{https://arxiv.org/abs/hep-th/0401192}{{arXiv:hep-th/0401192}}}.
\doiurl{10.1103/PhysRevD.70.024002}
\end{barticle}
\endbibitem

\bibitem{podolsky_structure_1999}
\begin{barticle}
\bauthor{\bsnm{Podolsky}, \binits{J.}}:
\batitle{{The Structure of the extreme Schwarzschild-de Sitter space-time}}.
\bjtitle{Gen. Rel. Grav.}
\bvolume{31},
\bfpage{1703}--\blpage{1725}
(\byear{1999})
{\href{https://arxiv.org/abs/gr-qc/9910029}{{arXiv:gr-qc/9910029}}}.
\doiurl{10.1023/A:1026762116655}
\end{barticle}
\endbibitem

\bibitem{lake_effects_1977}
\begin{barticle}
\bauthor{\bsnm{Lake}, \binits{K.}},
\bauthor{\bsnm{Roeder}, \binits{R.C.}}:
\batitle{{Effects of a Nonvanishing Cosmological Constant on the Spherically Symmetric Vacuum Manifold}}.
\bjtitle{Phys. Rev. D}
\bvolume{15},
\bfpage{3513}--\blpage{3519}
(\byear{1977}).
\doiurl{10.1103/PhysRevD.15.3513}
\end{barticle}
\endbibitem

\end{thebibliography}
\end{document}